\renewcommand{\@biblabel}[1]{\quad#1.}
\newcommand{\legi}[1]{{#1}}
\newcommand{\rokni}[1]{{#1}}
\newcommand{\david}[1]{{#1}}
\title{{\LARGE
A dynamic connectome supports the emergence of stable computational function of neural circuits through reward-based learning
\\[5mm]}
}
\author{
David Kappel$^{1,*}$, Robert Legenstein$^{*}$, Stefan Habenschuss, Michael Hsieh and Wolfgang Maass \\
Institute for Theoretical Computer Science\\
Graz University of Technology\\
8010 Graz, Austria \\
$^{1}$corresponding author: \texttt{kappel@igi.tugraz.at} \\[3mm]
$^{*}$equal contribution
}
\renewcommand{\d}{d}
\newcommand{\ve}[1]{\boldsymbol{#1}}
\newcommand{\expect}[2][]{\left\langle\, #2\, \right\rangle_{#1} }
\newtheorem{thm}{Theorem}%[section]
\newcommand{\cisd}{$\text{CI}_\text{SD}$}
\definecolor{gray}{gray}{0.2}
\newcommand{\cprob}[2]{p\left({#1}\,\middle\vert\,{#2}\right)}
\newcommand{\wiener}{\mathcal{W}}
\newcommand{\ddt}{\frac{\partial}{\partial t}}
\newcommand{\ddthetai}{\frac{\partial}{\partial \theta_i}}
\newcommand{\ddthetaisq}{\frac{\partial^2}{\partial \theta_{i}^2}}
\newcommand{\dd}[1]{\frac{\partial}{\partial #1}}
\newcommand{\hz}{y}
\newcommand{\bbz}{\mathbf{z}}
\newcommand{\bbzt}{\mathbf{z}(t)}
\newcommand{\bbztau}{\mathbf{z}(\tau)}
\newcommand{\bth}{\ve \theta}
\newcommand{\btht}{\ve \theta(t)}
\newcommand{\thi}{\theta_i(t)}
\newcommand{\pn}[2]{p_{\mathcal{N}}\left({#1}\,\middle\vert\,{#2}\right)}
\newcommand{\ps}[1]{p_{\mathcal{S}}\left({#1}\right)}
\newcommand{\syn}[1]{\text{\textsc{syn}}_{#1}}
\newcommand{\prei}{\text{\tiny{\textsc{pre}}}_{i}}
\newcommand{\posti}{\text{\tiny{\textsc{post}}}_{i}}
\newcommand{\preidef}{\text{\textsc{pre}}_{i}}
\newcommand{\postidef}{\text{\textsc{post}}_{i}}
\newcommand{\pl}[1]{{(\bf\MakeLowercase{#1})}}
\newcommand{\pr}[1]{{\bf\MakeLowercase{#1}}}
\newcommand{\figref}[2][]{{Fig.~\ref{#2}\MakeLowercase{#1}}}
\newcommand{\zk}{z_{\posti}}
\newcommand{\zkt}{\zk(t)}
\newcommand{\zks}{\zk(s)}
\newcommand{\fk}{f_{\posti}}
\newcommand{\fkt}{\fk(t)}
\newcommand{\fks}{\fk(s)}
\newcommand{\rt}{r(t)}
\newcommand{\rtau}{r(\tau)}
\newcommand{\rhatt}{\hat{r}(t)}
\newcommand{\rseq}{\ve r}
\newcommand{\rbin}{v_{b}}
\begin{document}

\maketitle

\begin{abstract}
Synaptic connections between neurons in the brain are dynamic because of continuously ongoing spine dynamics, axonal sprouting, and other processes. In fact, it was recently shown that the spontaneous synapse-autonomous component of spine dynamics is at least as large as the component that depends on the history of pre- and postsynaptic neural activity. These data are inconsistent with common models for network plasticity, and raise the questions how neural circuits can maintain a stable computational function in spite of these continuously ongoing processes, and what functional uses these ongoing processes might have. 
\rokni{Here, we present a rigorous theoretical framework for these seemingly stochastic spine dynamics and rewiring processes in the context of reward-based learning tasks.}
We show that spontaneous synapse-autonomous processes, in combination with reward signals such as dopamine, can explain the capability of networks of neurons in the brain to configure themselves for specific computational tasks, and to compensate automatically for later changes in the network or task. Furthermore we show theoretically and through computer simulations that stable computational performance is compatible with continuously ongoing synapse-autonomous changes. After reaching good computational performance it causes primarily a slow drift of network architecture and dynamics in task-irrelevant dimensions, as observed for neural activity in motor cortex and other areas. 
On the more abstract level of reinforcement learning the resulting model gives rise to an understanding of reward-driven network plasticity as \legi{continuous sampling of network configurations.}
(240 words)
\end{abstract}

\subsubsection*{Keywords}
Spine dynamics, rewiring, stochastic synaptic plasticity, synapse-autonomous processes, task-irrelevant dimensions in motor control, reward-modulated STDP, network computation, network plasticity, reinforcement learning, policy gradient, sampling.

\subsubsection*{Significance statement}

Networks of neurons in the brain do not have a fixed connectivity. We address the question how stable computational performance can be achieved by continuously changing neural circuits, and how these networks could even benefit from these changes. We show that the stationary distribution of network configurations provides a level of analysis where these issues can be addressed in a perspicuous manner. In particular, this theoretical framework allows us to address analytically the questions which rules for reward-gated synaptic rewiring and plasticity would work best in this context, and what impact different levels of activity-independent synaptic processes are likely to have. We demonstrate the viability of this approach through computer simulations and links to experimental data. (116 words)

\section*{Introduction}

The connectome is dynamic: Networks of neurons in the brain rewire themselves on a time scale of hours to days \cite{HoltmaatETAL:05,StettlerETAL:06,YangETAL:09,HoltmaatSvoboda:09,ZivAhissar:09,MinerbiETAL:09,KasaiETAL:10,LoewensteinETAL:11,LoewensteinETAL:15,RumpelTriesch:16,ChambersRumpel:17,vanOoyenButz-Ostendorf:17}. \david{This synaptic rewiring manifests in the emergence and vanishing of dendritic spines \cite{HoltmaatSvoboda:09}. Additional structural changes of established synapses are observable as a growth and shrinking of spine heads which take place even in the absence of neural activity \cite{YasumatsuETAL:08}.} The recent study of \cite{DvorkinZiv:16}, which includes in Fig.~8 a reanalysis of mouse brain data from \cite{KasthuriETAL:15}, showed that this spontaneous component is surprisingly large, at least as large as the impact of pre- and postsynaptic neural activity. \david{In addition, Nagaoka and colleagues provide direct evidence in vivo that the baseline turnover of dendritic spines is mediated by activity-independent intrinsic dynamics \cite{NagaokaETAL:16}. Furthermore, experimental data also suggest that task-dependent self-configuration of neural circuits is mediated by reward signals in \cite{YagishitaETAL:14}.
}

Other experimental data show that not only the connectome, but also the dynamics and function of neural circuits is subject to continuously ongoing changes. Continuously ongoing drifts of neural codes were reported in \cite{ZivETAL:13,DriscollHarvey:16}. Further data show that the mapping of inputs to outputs by neural networks that plan and control motor behavior are subject to a random walk on a slow time-scale of minutes to days,
%\footnote{This random walk should not be confused with the ubiquitous trial-to-trial variability of network responses on a fast time scale of ms to s, that takes place on top of this slower random walk, and is related to noise sources such as probabilistic vesicle release at synapses and stochastic opening and closing of channels in the membrane of neurons.}
that is conjectured to be related to stochastic synaptic rewiring and plasticity \rokni{\cite{van2013random,ChaisanguanthumETAL:14,RokniETAL:07}}. 

We address two questions that are raised by these data:
\begin{enumerate}[label=\roman*)]
\item How can stable network performance be achieved in spite of the experimentally found continuously ongoing rewiring and activity-independent synaptic plasticity in neural circuits?
\item What could be a functional role of these processes?
\end{enumerate}
Similar as \rokni{\cite{StatmanETAL:14,LoewensteinETAL:15,RokniETAL:07}} we model spontaneous synapse-autonomous spine dynamics of each potential synaptic connection $i$ through a stochastic process that modulates a corresponding parameter $\theta_i$. \rokni{We provide in this article a rigorous mathematical framework for such stochastic spine dynamics and rewiring processes. Our analysis shows that one} 
can describe the network configuration, i.e., the current state of the dynamic connectome and the strengths of all currently functional synapses, at any time point by a vector $\bth$ that encodes the current values $\theta_i$ for all potential synaptic connections $i$. The stochastic dynamics of this high-dimensional vector $\bth$ defines a Markov chain whose stationary distribution (illustrated in \figref[D]{fig:model-illustration}) provides insight into questions that address the relation between properties of local synaptic processes and the computational function of a neural network.

\rokni{Based on the well-studied paradigm for reward-based learning in neural networks, we} propose the following answer to question i): As long as most of the mass of this stationary distribution lies in regions or low-dimensional manifolds of the parameter space that produce good performance, stable network performance can be assured in spite of continuously ongoing movement of $\bth$ \rokni{\cite{LoewensteinETAL:15}}. \rokni{Our experimental results suggest that when a computational task has been learnt, most of the subsequent dynamics of $\bth$ takes place in task-irrelevant dimensions.}

\david{
The same model also provides an answer to question ii): Synapse-autonomous stochastic dynamics of the parameter vector $\bth$ enables the network not only to find in a high-dimensional regions with good network performance, but also to rewire the network in order to compensate for changes in the task. We analyze how the strength of the stochastic component of synaptic dynamics affects this compensation capability. We arrive at the conclusion that compensation works best for the task considered here if the stochastic component is as large as in experimental data \cite{DvorkinZiv:16}.
}

On the more abstract level of reinforcement learning, our theoretical framework for reward-driven network plasticity suggests a new algorithmic paradigm for network learning: \legi{policy sampling}. Compared with the familiar policy gradient learning \cite{Williams92simple,BaxterBartlett:00,PetersSchaal:06} this paradigm is more consistent with experimental data that suggest a continuously ongoing drift of network parameters. 
%In addition, policy sampling is functionally more powerful than policy gradient, since it enables the integration of priors and automatic compensation for changes in the network or task.

%We also show that this theoretical framework for reward-gated rewiring and synaptic plasticity provides a method for deriving rules for synaptic plasticity from first principles, and can be applied to a wide range of neuron and network models. Hence it 
The resulting model for reward-gated network plasticity builds on the approach from \cite{KappelETAL:15} for unsupervised learning, that was only applicable to a specific neuron model and a specific STDP-rule.  
%We are not aware of previous models that aim at elucidating how recurrent neural circuits in the brain can selforganize for specific computations through an interaction of spontaneous rewiring with rewards. 
Since the new approach can be applied to arbitrary neuron models, in particular also to large data-based models of neural circuits and systems, it can be used to explore how data-based models for neural circuits and brain areas can attain and maintain a computational function.
(668 words)

\section*{Results}

\begin{figure}
\begin{center}
  \includegraphics{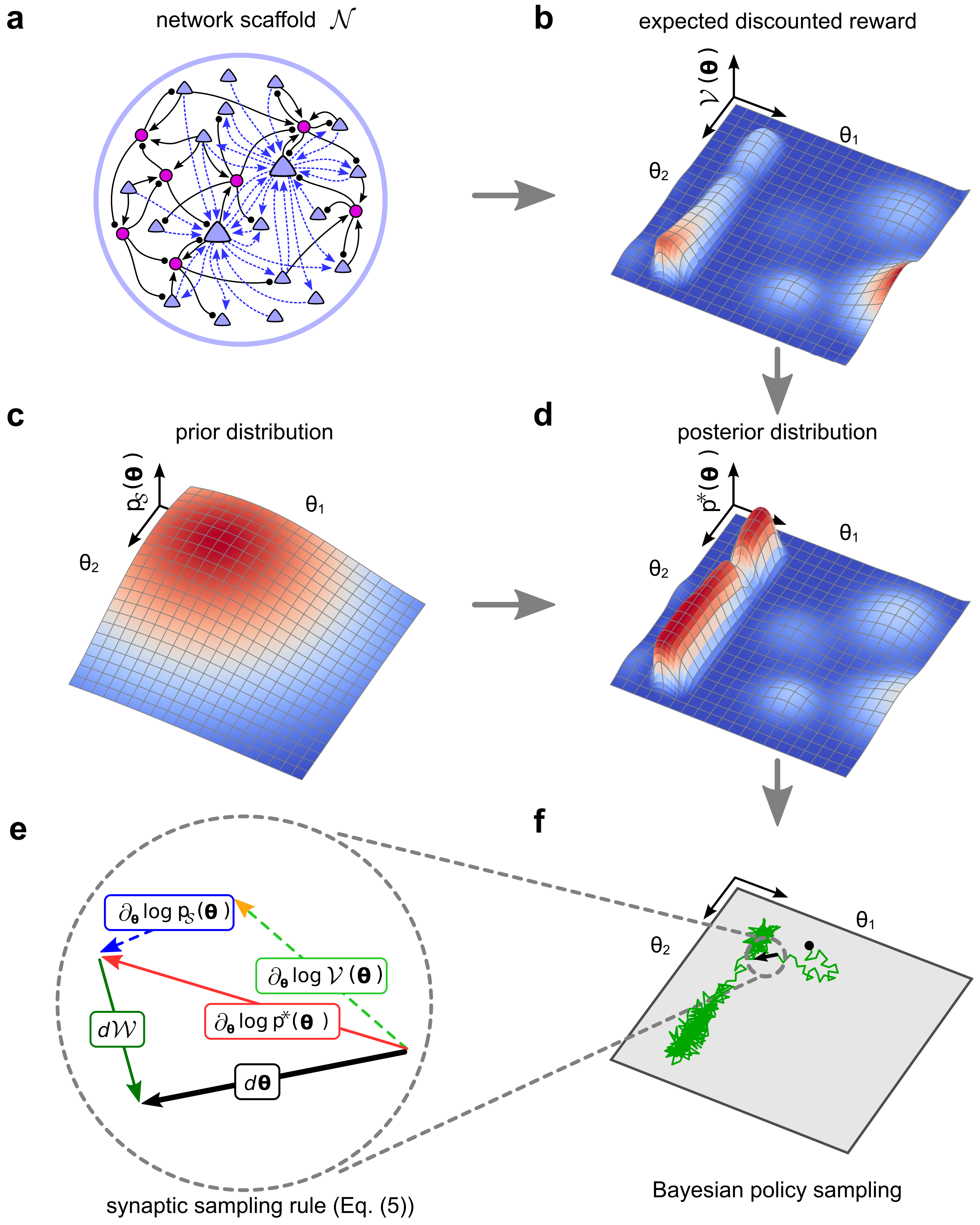}
\end{center}
\caption{{\bf Illustration of the theoretical framework.}
\pl{A} A neural network scaffold $\mathcal{N}$ of excitatory (blue triangles) and inhibitory (purple circles) neurons. Potential synaptic connections (broken blue arrows) of only two excitatory neurons are shown to keep the figure uncluttered. Synaptic connections (black connections) from and to inhibitory neurons are assumed to be fixed for simplicity.
\pl{B}
A reward landscape for two parameters $\ve\theta=\{\theta_1,\theta_2\}$ with several local optima. \rokni{Z-amplitude and color indicate the expected reward $\mathcal{V}(\bth)$ for given parameters $\bth$ (X-Y plane).}
\pl{C}
Example prior that prefers small values for $\theta_1$ and $\theta_2$.
\pl{D}
The posterior distribution $p^*(\ve \theta)$  that results as product of the prior from panel \pr{C} and the expected discounted reward of panel \pr{B}.
\pl{E}
Illustration of the dynamic forces (plasticity rule Eq.~\eqref{eq:sde}) that act on $\ve \theta$ in each sampling step $d \ve \theta$ (black) while sampling from the posterior distribution. The deterministic term (red), which consists of the first two terms (prior and reward expectation) in Eq.~\eqref{eq:sde}, is directed to the next local maximum of the posterior.
The stochastic term $d \wiener$ (green) of Eq.~\eqref{eq:sde} has a random direction.
\pl{F}
A single trajectory of \legi{policy sampling} from the posterior distribution of panel \pr{D} under Eq.~\eqref{eq:sde}, starting at the black dot. The parameter vector $\ve \theta$ fluctuates between different solutions, and moves primarily along the task-irrelevant dimension $\theta_2$.
}
\label{fig:model-illustration}
\end{figure}

We first address the design of a suitable theoretical framework for investigating the self-organization of neural circuits for specific computational tasks in the presence of spontaneous synapse-autonomous processes and rewards. There exist well-established models for reward-modulated synaptic plasticity, see e.g. \cite{FremauxETAL:10}, where reward signals gate common rules for synaptic plasticity, such as STDP. But these rules are lacking two components that we need here:
\begin{itemize}
  \item an integration of rewiring with plasticity rules that govern the modulation of the strengths of already existing synaptic connections 
  \item a term that reflects the spontaneous synapse-autonomous component of synaptic plasticity and rewiring.
\end{itemize}
In order to illustrate our approach we consider a neural network scaffold \rokni{(see \figref[A]{fig:model-illustration})} with a large number of potential synaptic connections between excitatory neurons. Only a subset of these potential connections is assumed to be functional at any point in time.

\rokni{
If one allows rewiring then the concept of a neural network becomes problematic, since the definition of a neural network typically includes its synaptic connections. Hence we refer to the set of neurons of a network, its set of potential synaptic connections, and its set of definite synaptic connections -- such as in our case connections from and to inhibitory neurons (see \figref[A]{fig:model-illustration}) -- as a \emph{network~ scaffold}. A network scaffold $\mathcal{N}$ together with a  parameter vector $\bth$ that specifies a particular selection of functional synaptic connections out of the set of potential connections and particular synaptic weights for these defines a concrete neural network, to which we also refer as \emph{network configuration}.
}

For simplicity we assume that only excitatory connections are plastic, but the model can be easily extended to also reflect plasticity of inhibitory synapses. For each potential synaptic connection $i$, we introduce a parameter $\theta_i$ that describes its state both for the case when this potential connection $i$ is currently not functional (this is the case when $\theta_i  \leq 0$) and when it is functional (i.e., $\theta_i > 0$). More precisely, $\theta_i$ encodes the current strength or weight $w_i$ of this synaptic connection through the formula
\begin{equation}
 w_{i} \;=\; \begin{cases} 
                 \exp( \theta_i - \theta_0 ) \quad & \text{if } \theta_i > 0 \qquad (\textit{functional synaptic connection}) \\
                 0 \quad & \text{if } \theta_i \leq 0  \qquad (\textit{non-functional potential connection}) \\
               \end{cases} \;,
               \label{eq:thetamap}
\end{equation}
with a positive offset parameter $\theta_0$ that regulates the initial strength of new functional synaptic connections (we set $\theta_0 =3$ in our simulations).

The exponential function in Eq.~\eqref{eq:thetamap} turns out to be useful for relating the dynamics of $\theta_i$ to experimental data on the dynamics of synaptic weights. The volume -- or image brightness in Ca-imaging -- of a dendritic spine is commonly assumed to be proportional to the strength $w_i$ of a synapse \cite{HoltmaatETAL:05}. The logarithm of this estimate for $w_i$ was shown in Fig.~2i of \cite{HoltmaatETAL:06} and also in \cite{YasumatsuETAL:08,LoewensteinETAL:11} to exhibit a dynamics similar to that of an Ornstein-Uhlenbeck process, i.e., a random walk in conjunction with a force that draws the random walk back to its initial state. Hence if $\theta_i$  is chosen to be proportional to the logarithm of $w_i$, it is justified to model the spontaneous dynamics of $\theta_i$ as an Ornstein-Uhlenbeck process. This is done in our model, as we will explain after Eq.~\eqref{eq:sde} and demonstrate in Fig.~\ref{fig:pattern-classification}c. The logarithmic transformation also ensures that additive increments of $\theta_i$ yield multiplicative updates of $w_i$, which have been observed experimentally \cite{LoewensteinETAL:11}.

Altogether our model needs to create a dynamics for $\theta_i$ that is not only consistent with experimental data on spontaneous spine dynamics, but is for the case $\theta_i > 0$ also consistent with rules for reward-modulated synaptic plasticity as in \cite{FremauxETAL:10}.  This suggests to look for plasticity rules of the form
\begin{equation}
	d \theta_i \;=\; \beta \, \times \,(\text{deterministic plasticity rule}) \times dt \;+\; \sqrt{2 \beta T} \, d\mathcal{W}_i \; ,
	\label{eq:sde-illustration}
\end{equation}
where the deterministic plasticity rule could for example be a standard reward-based plasticity rule. 
%\rokni{We will later see that also the deterministic drift term of the Ornstein-Uhlenbeck process can be absorbed into this term. More precisely,}
\rokni{We will argue below that it makes sense to include also an activity-independent prior in this deterministic component of rule \eqref{eq:sde-illustration},} both for functional reasons and in order to fit data on spontaneous spine dynamics. \legi{We will further see that when the activity-independent prior dominates, we obtain the Ornstein-Uhlenbeck process mentioned above.}
The stochastic term $d\mathcal{W}_i$ in Eq.~\eqref{eq:sde-illustration} is an infinitesimal step of a random walk, more precisely for a Wiener process $\mathcal{W}_i$. A Wiener process is a standard model for Brownian motion in one dimension \cite{Gardiner:04}. The term $\sqrt{2 \beta T}$ scales the strength of this stochastic component in terms of a ``temperature'' $T$ and a learning rate $\beta$, and is chosen to be of a form that supports analogies to statistical physics. The presence of this stochastic term makes it unrealistic to expect that $\theta_i$ converges to a particular value under the dynamics defined by Eq.~\eqref{eq:sde-illustration}. In fact, in contrast to many standard differential equations, the stochastic differential equation or SDE \eqref{eq:sde-illustration} does not have a single trajectory of $\theta_i$ as solution, but an infinite family of trajectories that result from different random walks. 

We propose to focus -- instead of the common analysis of the convergence of weights to specific values as invariants -- on the most prominent invariant that a stochastic process can offer: the longterm stationary distribution of synaptic connections and weights. The stationary distribution of the vector $\bth$ of all synaptic parameters $\theta_i$ informs us about the statistics of the infinitely many different solutions of a stochastic differential equation of the form \eqref{eq:sde-illustration}. In particular, it informs us about the fraction of time at which particular values of $\bth$ will be visited by these solutions \david{(see \nameref{sec:methods} for details)}. We show that a large class of reward-based plasticity rules produce in the context of an equation of the form \eqref{eq:sde-illustration} a stationary distribution of $\bth$ that can be clearly related to reward expectation for the neural network, and hence to its computational function.  

%\rokni{To ensure that the dynamics converge to a stationary distribution $p^{*}(\bth)^\frac{1}{T}$, we choose the deterministic term of \eqref{eq:sde-illustration} as $\ddthetai \log p^{*}(\bth)$, where $\ddthetai$ denotes the partial derivative with respect to parameter $\theta_i$. The stochastic processes can then be written as}
We want to address the question \legi{whether} reward-based plasticity rules achieve in the context with other
terms in Eq.~\eqref{eq:sde-illustration} that the resulting stationary distribution of network configurations has most of its mass on
highly rewarded network configurations. A key observation is that if the first term on the right-hand-side
of \eqref{eq:sde-illustration} can be written for all potential synaptic connections $i$ in the form $\ddthetai \log p^{*}(\bth)$, where $p^{*}(\bth)$ is some arbitrary given distribution and $\ddthetai$ denotes the partial derivative with respect to parameter $\theta_i$, then these
stochastic processes
\begin{equation}
d \theta_i \;=\; \beta \, \ddthetai \,\log p^*(\bth) \, dt  \;+ \; \sqrt{2 \beta T} \, d \wiener_{i} 
\label{eq:sde-reduced}
\end{equation}
give rise to a stationary distribution that is proportional to $p^{*}(\bth)^\frac{1}{T}$.
Hence,  a rule for reward-based synaptic plasticity that can be written in the form $\ddthetai \log p^{*}(\bth)$, where $p^{*}(\bth)$ has most of its mass on highly rewarded network configurations $\bth$, achieves that the network will spend most of its time in highly rewarded network configurations. This will hold even if the network does not converge to or stay in any particular network configuration $\bth$ (see \figref[D,F]{fig:model-illustration} for an illustration). Furthermore the role of the temperature $T$ in \eqref{eq:sde-reduced} becomes clearly visible in this result: if $T$ is large the resulting stationary distribution flattens the distribution $p^{*}(\bth)$, whereas for $0<T<1$ the network will remain for larger fractions of the time in those regions of the parameter space where $p^{*}(\bth)$ achieves its largest values. In fact, if the temperature $T$ converges to 0, the resulting stationary distribution degenerates to one that has all of its mass on the network configuration $\bth$ for which  $p^{*}(\bth)$ reaches its global maximum, as in simulated annealing \cite{KirkpatrickETAL:83}.

We will focus on target distributions $p^{*}(\bth)$ of the form 
\begin{equation}
  p^*(\bth) \;\propto\; p_S(\bth) \, \times \, \mathcal{V}(\bth)  \; ,
\label{eq:bayes}
\end{equation}
where $\propto$ denotes proportionality up to a positive normalizing constant. $p_S(\bth)$ can encode structural priors of the network scaffold $\mathcal{N}$. For example, it can encode a preference for sparsely connected networks. This happens when $p_S(\bth)$ has most of its mass near $\boldsymbol{0}$, see \figref[C]{fig:model-illustration} for an illustration. But it could also convey genetically encoded or previously learnt information, such as a preference for having strong synaptic connections between two specific populations of neurons. The term $\mathcal{V}(\bth)$ in Eq.~\eqref{eq:bayes} denotes the expected discounted reward associated with a given parameter vector $\bth$ (see \figref[B]{fig:model-illustration}). Eq.~\eqref{eq:sde-reduced} for the stochastic dynamics of parameters takes then the form
\begin{equation}
d \theta_i \;=\; \beta \, \left( \ddthetai \,\log \ps{\bth}  \, + \, \ddthetai \log \mathcal{V}(\bth)  \right)  dt  \;+ \; \sqrt{2 \beta T} \, d \wiener_{i} \;.
\label{eq:sde}
\end{equation}
When the term $\ddthetai \log \mathcal{V}(\bth)$ vanishes, this equation models spontaneous spine dynamics. We will make sure that this term vanishes for all potential synaptic connections $i$ that are currently not functional, i.e., where $\theta_{i} \leq 0$. If one chooses a Gaussian distribution as prior $\ps{\bth}$, the dynamics of \eqref{eq:sde} amounts in the case $ \ddthetai \,\log \mathcal{V}({\bth}) = 0 $ to an Ornstein-Uhlenbeck process. \david{There is currently no generally accepted description of spine dynamics. Ornstein-Uhlenbeck dynamics has previously been proposed as a simple model for experimentally observed spontaneous spine dynamics \cite{YasumatsuETAL:08,LoewensteinETAL:11,LoewensteinETAL:15}. Another proposed model uses a combination of multiplicative and additive stochastic dynamics \cite{StatmanETAL:14,RubinskiZiv:15}. We used in our simulations a Gaussian distribution that prefers small but nonzero weights for the prior $\ps{\bth}$. Hence, our model  \eqref{eq:sde} is consistent with previous Ornstein-Uhlenbeck models for spontaneous spine dynamics.}
%We used here Ornstein-Uhlenbeck process dynamics for the synaptic parameters $\bth$ with a prior $\ps{\bth}$ given by a Gaussian distribution that prefers small but nonzero weights.}

Thus altogether we arrive at a model for the interaction of stochastic spine dynamics with reward where the usually considered deterministic convergence to network configurations $\bth$ that represent  local maxima of expected reward $\mathcal{V}(\bth)$ (e.g. to the local maxima in \figref[B]{fig:model-illustration}) is replaced by a stochastic model. If the stochastic dynamics of $\bth$ is defined by local stochastic processes of the form \eqref{eq:sde}, as indicated in \figref[E]{fig:model-illustration}, the resulting stochastic model for network plasticity will spend most of its time in network configurations $\bth$ where the posterior $p^{*}(\bth)$, illustrated in \figref[D]{fig:model-illustration}, approximately reaches its maximal value. This provides on the statistical level a guarantee of \david{task performance}, in spite of ongoing stochastic dynamics of all the parameters $\theta_i$.

\subsection*{Reward-based rewiring and synaptic plasticity as \legi{policy sampling}}

\begin{figure}
\begin{center}
  \includegraphics{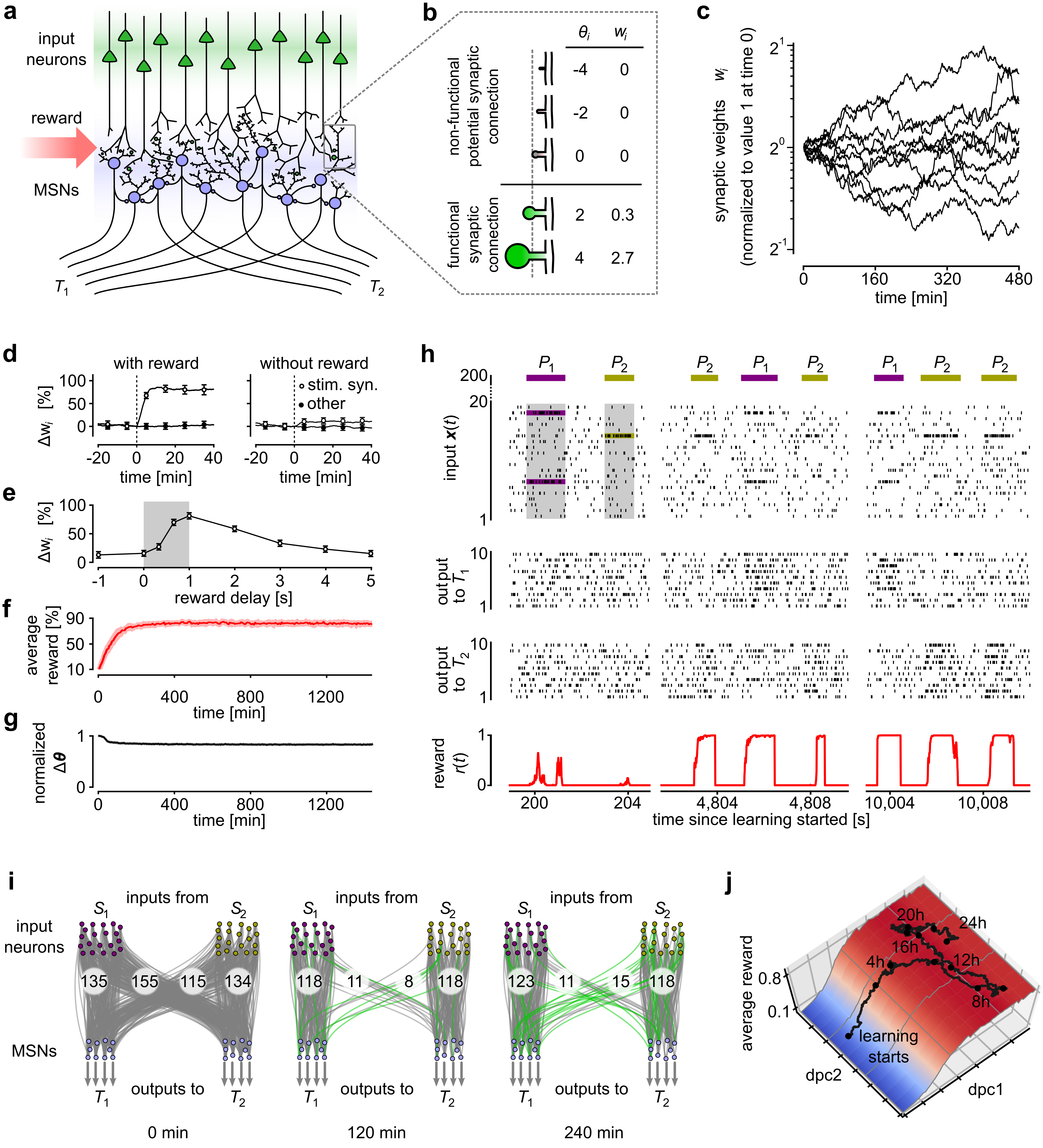}
\end{center}
\caption{{\bf Reward-based routing of input patterns.}
\pl{A} Illustration of the  network scaffold. A population of 20 model MSNs (blue) receives input from 200 excitatory input neurons (green) that model cortical neurons. Potential synaptic connections between these 2 populations of neurons were subject to reward-based synaptic sampling. In addition, fixed lateral connections provided recurrent inhibitory input to the MSNs.
\textit{Caption continued on next page...}
}
\label{fig:pattern-classification}
\end{figure}

\begin{figure}[htp]
\textit{Caption of \figref{fig:pattern-classification} continued:}
The MSNs were divided into two groups, each projecting exclusively to one of two target areas $T_1$ and $T_2$. Reward was delivered whenever the network managed to route an input pattern $P_i$ primarily to that group of MSNs that projected to target area $T_i$.
\pl{B}
Illustration of the model for spine dynamics. Five potential synaptic connections at different states are shown. Synaptic spines are represented by circular volumes with diameters proportional to $\sqrt[3]{w_i}$  for functional connections, assuming a linear correlation between spine-head volume and synaptic efficacy $w_i$ \cite{MatsuzakiETAL:01}. %Spine neck dimensions are scaled accordingly to facilitate the illustration.
\pl{C}
Dynamics of weights $w_i$ in log-scale for 10 potential synaptic connections $i$ when the activity-dependent term $\ddthetai \log \mathcal{V}(\bth)  dt$ in Eq.~\eqref{eq:sde} is set equal to zero). Consistent with experimental date (see e.g. Fig. 2i of \cite{HoltmaatETAL:06}) the dynamics is in this case consistent with an Ornstein-Uhlenbeck process in the logarithmic scale. Weight values are plotted relative to the initial value at time 0.
\pl{D, E}
Dynamics of a model synapse when a reward-modulated STDP pairing protocol as in \cite{YagishitaETAL:14} was applied.
\pl{D} Reward delivery after repeated firing of the presynaptic neuron before the postsynaptic neuron resulted in a strong weight increase (left). This effect was reduced without reward (right), and prevented completely if no presynaptic stimulus was applied. Values in \pr{D} and \pr{E} represent percentage of weight changes relative the pairing onset time (dashed line, means and s.e.m. over 50 synapses).
 Compare with Fig.~1F,G in \cite{YagishitaETAL:14}.
\pl{E}
Dependence of resulting changes in synaptic weights in our model as a function of the delay of reward delivery. Gray shaded rectangle indicates the time window of STDP pairing application. Reward delays denote time between paring and reward onset. Compare to Figure 1O in \cite{YagishitaETAL:14}.
\pl{F}
The average reward achieved by the network increased quickly during learning according to Eq.~\eqref{eq:sde} (mean over 5 independent trial runs; shaded area indicates s.e.m.).
\pl{G}
Synaptic parameters kept changing throughout the experiment in \pr{F}. The magnitude of the change of the synaptic parameter vector $\bth$ is shown (mean $\pm$ s.e.m. as in \pr{F}; Euclidean norm, normalized to the maximum value). The parameter change peaks at the onset of learning, but remains high (larger than $80\%$ of the maximum value) even when stable performance has been reached.
\pl{H}
Spiking activity of the network during learning. Activities of 20 randomly selected input neurons and all MSNs are shown. 3 salient input neurons (belonging to pools $S_1$ or $S_2$ in \pr{I}) are highlighted. Most neurons have learnt to fire at a higher rate for the input pattern $P_j$ that corresponds to the target area $T_j$ to which they are projecting.
%The neurons that project to target areas $T_1$ and $T_2$ have learned to respond to their associated input pattern with increased firing activity. 
Bottom: reward delivered to the network.
\pl{I}
Dynamics of network rewiring throughout learning. Snapshots of network configurations for the times $t$ indicated below the plots are shown. Gray lines indicate active connections between neurons; connections that were not present at the preceding snapshot are highlighted in green. All output neurons and two subsets of input neurons that fire strongly in pattern $P_1$ or $P_2$ are shown (pools $S_1$ and $S_2$, 20 neurons each). Numbers denote total counts of functional connections between pools. The connectivity was initially dense and then rapidly restructured and became sparser. Rewiring took place all the time throughout learning.
\rokni{
\pl{J}
Analysis of random exploration in task-irrelevant dimensions of the parameter space. Projection of the parameter vector $\bth$ to the two dPCA components that best explain the variance of the average reward. dpc1 explains more than 99.9\% of the reward variance (dpc2 and higher dimensions less than 0.1\%). A single trajectory of the high-dimensional synaptic parameter vector over 24 hours of learning projected onto dpc1 and dpc2 is shown. Amplitude on the y-axis denotes the estimated average reward (in fractions of the total maximum achievable reward). After converging to a region of high reward (movement mainly along dpc1) network continues to explore task-irrelevant dimensions (movement mainly along dpc2).
}
\end{figure}

We assume that all \david{synapses and neurons in the network scaffold} $\mathcal{N}$ receives reward signals $\rt$ at certain times $t$, corresponding for example to dopamine signals in the brain (see \cite{CollinsFrank:16} for a recent discussion of related experimental data). The expected discounted reward $\mathcal{V}(\bth)$ that occurs in the second term of Eq.~\eqref{eq:sde} is the \david{expectation of the time integral over} all future rewards $\rt$, while discounting more remote rewards exponentially, see Eq.~\eqref{eqn:reward-prob-factorized} in \nameref{sec:methods}. \figref[B]{fig:model-illustration} shows a hypothetical $\mathcal{V}(\bth)$-landscape over two parameters $\theta_1, \theta_2$. The posterior $p^*(\bth)$ shown in \figref[D]{fig:model-illustration} is then proportional to the product of $\mathcal{V}(\bth)$ (panel {b) and the prior (panel {c}).

The computational behavior of the network configuration, i.e., the mapping of network inputs to network outputs that is encoded by the parameter vector $\bth$, is referred to as a policy in the context of reinforcement learning theory. \david{
The parameters $\bth$ (and therefore the policy) are gradually changed through Eq.~\eqref{eq:sde} such that the expected discounted reward $\mathcal{V}(\bth)$ is increased: The parameter dynamics follows the gradient of $\log \mathcal{V}(\bth)$, i.e., $\frac{d \theta_i}{d t} = \beta \frac{\partial}{\partial \theta_i} \log \mathcal{V}(\bth)$, where $\beta>0$ is a small learning rate. When the parameter dynamics is given solely by the second term in the parenthesis of Eq.~\eqref{eq:sde}, $\ddthetai \log \mathcal{V}(\bth)$, we recover for the case $\theta_i > 0$ deterministic policy gradient learning \cite{Williams92simple,BaxterBartlett:00,PetersSchaal:06}.}

\david{For a network scaffold  $\mathcal{N}$ of spiking neurons,} the derivative $\frac{\partial}{\partial \theta_i} \log \mathcal{V}(\bth)$  gives rise to synaptic updates at a synapse $i$ that are essentially given by the product of the current reward signal $\rt$ and an eligibility trace that depends on pre- or postsynaptic firing times, see {\em Synaptic dynamics for the reward-based synaptic sampling model} in \nameref{sec:methods}. Such plasticity rules 
%with eligibility traces for policy gradient learning in spiking neural networks 
have previously been proposed by \cite{Seung:03,XieSeung:04,Izhikevich:07,PfisterETAL:06,Florian:07,LegensteinETAL:08,UrbanczikSenn:09}. For non-spiking neural networks, a similar update rule was first introduced by Williams and termed the REINFORCE rule \cite{Williams92simple}. 

In contrast to policy gradient, reinforcement learning in the presence of the stochastic last term in Eq.~\eqref{eq:sde} cannot converge to any network configuration. Instead, the dynamics of Eq.~\eqref{eq:sde} produces continuously changing network configurations, with a preference for configurations that both satisfy constraints from the prior $p_S (\bth)$  and provide a large expected reward $\mathcal{V}(\bth)$, see \figref[D,F]{fig:model-illustration}. Hence this type of reinforcement learning samples continuously from a posterior distribution of network configurations. This is rigorously proven in Theorem~\ref{lem:single_sup} of \nameref{sec:methods}. We refer to this reinforcement learning model as \legi{{\em policy sampling}}, and to the family of reward-based plasticity rules that are defined by Eq.~\eqref{eq:sde} as {\em reward-based synaptic sampling}.

Another key difference to previous models for reward-gated synaptic plasticity and policy gradient learning is, apart from the stochastic last term of Eq.~\eqref{eq:sde}, that the deterministic first term of Eq.~\eqref{eq:sde} also contains a reward-independent component $\frac{\partial}{\partial \theta_i} \log p_S (\bth)$ that arises from a prior $p_S (\bth)$ for network configurations. In our simulations we consider a simple Gaussian prior $p_S (\bth)$ with mean $\boldsymbol{0}$ that encodes a preference for sparse connectivity (see Eq.~\eqref{eqn:dprior-dtheta}).

It is important that the dynamics of disconnected synapses, i.e., of synapses $i$ with $\theta_i \leq 0$ or equivalently $w_i = 0$, does not depend on \david{pre-/postsynaptic neural activity or reward} since non-functional synapses do not have access to such information. This is automatically achieved through our ansatz $\frac{\partial}{\partial \theta_i} \log \mathcal{V}(\bth)$ for the reward-dependent component in Eq.~\eqref{eq:sde}, since a simple derivation shows that it entails that the factor $w_i$ appears in front of the term that depends on pre- and postsynaptic activity, see Eq.~\eqref{eqn:eligibility-trace}. Instead, the dynamics of $\theta_i$ depends for $\theta_i \leq 0 $ only on the prior and the stochastic term $d \mathcal{W}_i$. This results in a distribution over waiting times between downwards and upwards crossing of the threshold $\theta_i = 0 $ that was found to be similar to the distribution of inter-event times of a Poisson point process, see \cite{DingRangarajan:04} for a detailed analysis. This theoretical result suggest a simple approximation of the dynamics of Eq.~\eqref{eq:sde} for currently non-functional synaptic connections, where the process \eqref{eq:sde} is suspended whenever $\theta_i$ becomes negative, and continued with $\theta_i = 0$ after a waiting time that is drawn from an exponential distribution. As in \cite{DegerETAL:16} this can be realized by letting a non-functional synapse become functional at any discrete time step with some fixed probability (Poisson process). We have compared in \figref[C]{fig:exp-peters} the resulting learning dynamics of the network for this simple approximation with that of the process defined by Eq.~\eqref{eq:sde}.

\subsection*{Task-dependent routing of information through the interaction of stochastic spine dynamics with rewards}

Experimental evidence about gating of spine dynamics by reward signals in the form of dopamine is available for the synaptic connections from the cortex to the entrance stage of the basal ganglia, the medium spiny neurons (MSNs) in the striatum \cite{YagishitaETAL:14}. They report that the volumes of their dendritic spines show significant changes only when pre- and postsynaptic activity is paired with precisely timed delivery of dopamine (see  \cite{YagishitaETAL:14}, Fig.~1 E-G, O). More precisely, an STDP pairing protocol followed by dopamine uncaging induced strong LTP in synapses onto MSNs, whereas the same protocol without dopamine uncaging lead only to a minor increase of synaptic efficacies. 
%This result can be reproduced in our model by learning rules with eligibility traces such as the reward-based synaptic sampling model outlined above (see \figref[D]{fig:pattern-classification}). 

MSNs can be viewed as readouts from a large number of cortical areas, that become specialized for particular motor functions, e.g. movements of the hand or leg. We asked whether reward gating of spine dynamics according to the experimental data of \cite{YagishitaETAL:14} can explain such task dependent specialization of MSNs. More concretely, we asked whether it can achieve that two different distributed activity patterns $P_1$, $P_2$ of upstream neurons in the cortex get routed to two different ensembles of MSNs, and thereby to two different downstream targets $T_1$ and $T_2$ of these MSNs (see \figref[A,H,I]{fig:pattern-classification}). We assumed that for each upstream activity pattern $P_j$ a particular subset $S_j$ of upstream neurons is most active, $j=1,2$. Hence this routing task amounted to routing synaptic input from $S_j$ to those MSNs that project to downstream neuron $T_j$.

We applied to all potential synaptic connections $i$ from upstream neurons to MSNs a learning rule according to Eq.~\eqref{eq:sde}, more precisely, the rule for reward-gated STDP (Eq.~\eqref{eqn:eligibility-trace}, Eq.~\eqref{eqn:gradient-est} and Eq.~\eqref{eqn:std-synapse}) that results from this general framework. The parameters of the model were adapted to qualitatively reproduce the results from Figures 1F,G of \cite{YagishitaETAL:14} when the same STDP protocol was applied to our model (see \figref[D,E]{fig:pattern-classification}). The parameter values are reported in Tab.~\ref{tab:parameters} in \nameref{sec:methods}. If not stated otherwise, we applied these parameters in all following experiments.
\rokni {
In {\em Role of the prior distribution} in \nameref{sec:methods}, we further analyze the impact of different prior distributions on task performance and network connectivity.
}

Our simple model consisted of 20 inhibitory model MSNs with lateral recurrent connections. These received excitatory input from 200 input neurons. The synapses from input neurons to model MSNs were subject to our plasticity rule. Multiple connections were allowed between each pair of input neuron and MSN (see \nameref{sec:methods}). The MSNs were randomly divided into two assemblies, each projecting exclusively to one of two downstream target areas $T_1$ and $T_2$.
Cortical input $\ve{x}(t)$ was modeled as Poisson spike trains from the 200 input neurons with instantaneous rates defined by two prototype rate patterns $P_1$ and $P_2$, see \figref[H]{fig:pattern-classification}. The task was to learn to activate $T_1$-projecting neurons and to silence $T_2$-projecting neurons whenever pattern $P_1$ was presented as cortical input. For pattern $P_2$, the activation should be reversed: activate $T_2$-projecting neurons and silence those projecting to $T_1$. This desired function was defined through a reward signal $\rt$ that was proportional to the ratio between the mean firing rate of MSNs projecting to the desired target and that of MSNs projecting to the non-desired target area (see \nameref{sec:methods}).

\figref[H]{fig:pattern-classification} shows the firing activity and reward signal of the network during segments of one simulation run. After about 80 minutes of simulated biological time, each group of MSNs had learned to increase its firing rate when the activity pattern $P_j$ associated with its projection target  $T_j$ was presented. \figref[F]{fig:pattern-classification} shows the average reward throughout learning. After 3 hours of learning about $82\%$ of the maximum reward was acquired on average, and this level was maintained during prolonged learning.

\figref[G]{fig:pattern-classification} shows that the parameter vector $\bth$ kept moving at almost the same speed even after a high plateau of rewards had been reached. Hence these ongoing parameter changes took place in dimensions that were irrelevant for the reward-level.

\figref[I]{fig:pattern-classification} provides snapshots of the underlying ``dynamic connectome'' \cite{RumpelTriesch:16} at different points of time. New synaptic connections that were not present at the preceding snapshot are colored green. One sees that the bulk of the connections maintained a solution of the task to route inputs from $S_1$ to target area $T_1$ and inputs from $S_2$ to target area $S_2$. But the identity of these connections, a task-irrelevant dimension, kept changing. In addition the network always maintained some connections to the currently undesired target area, thereby providing the basis for a swift built-up of these connections if these connections would suddenly also become rewarded.

\rokni{
We further examine the exploration along task-irrelevant dimensions in \figref[J]{fig:pattern-classification}. Here, the high-dimensional parameter vector over a training experiment of 24~h projected to the first two components of the demixed principal component analysis (dPCA) that best explain the variance of the average reward is shown (see Methods and \cite{KobakETAL:16}). The first component (dpc1) explains $>99.9\%$ of the variance. Movement of the parameter vector mainly takes place along this dimensions during the first 4 hours of learning. After the performance has converged to a high value, exploration continues along other components (dpc2, and higher components) that explain less than $0.1\%$ of the average reward variance.
}

This simulation experiment showed that reward-gated spine dynamics as analyzed in \cite{YagishitaETAL:14} is sufficiently powerful from the functional perspective to rewire networks so that each signal is delivered to its intended target.

\begin{figure}[hp]
\begin{center}
  \includegraphics{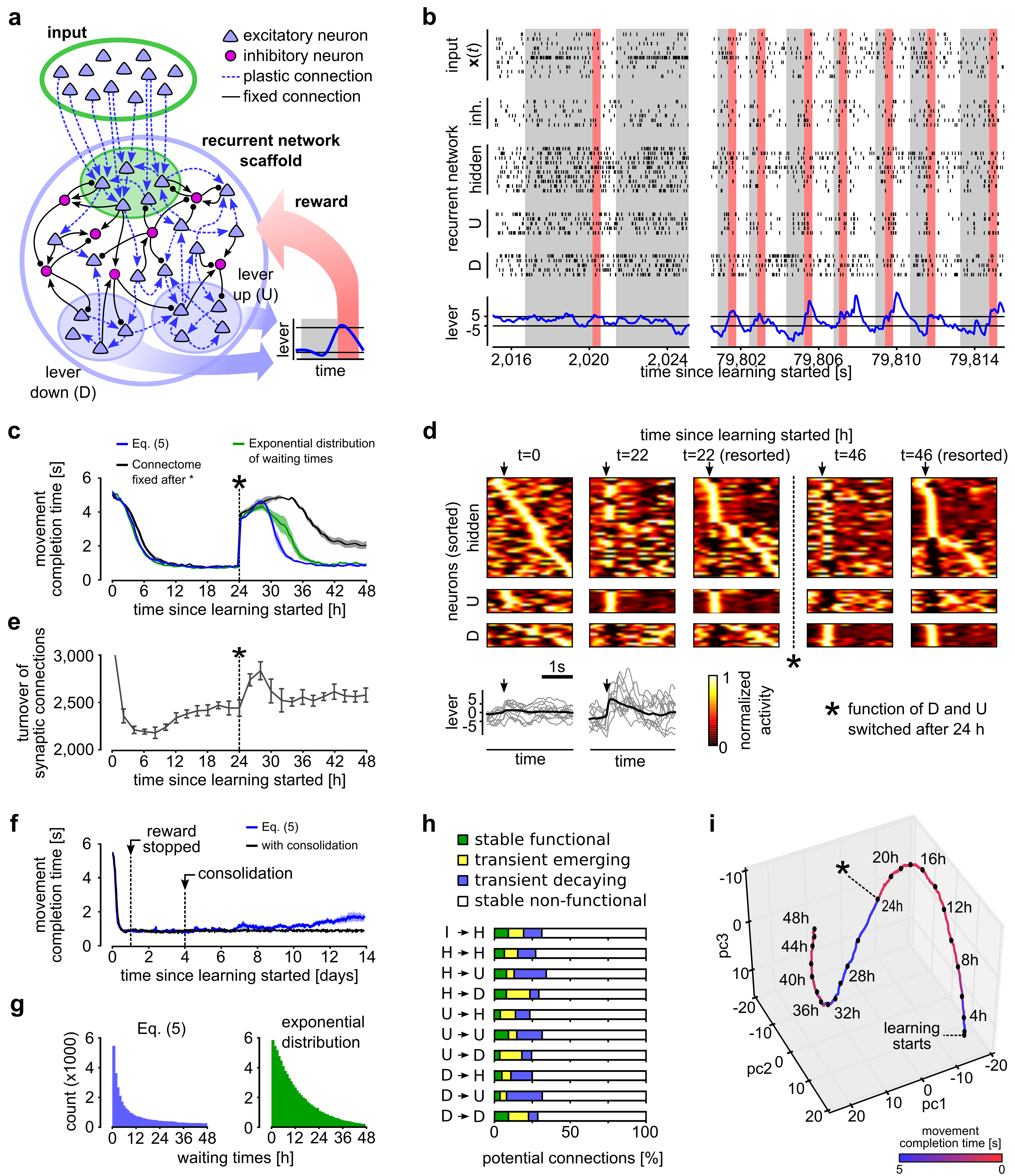}
\end{center}
\caption{{\bf Reward-based self-configuration and compensation capability of a recurrent neural network.}
\pl{A}
Network scaffold and task schematic. Symbol convention as in \figref[A]{fig:model-illustration}. A recurrent network scaffold of excitatory and inhibitory neurons (large blue circle); a subset of excitatory neurons received input from afferent excitatory neurons (indicated by green shading).
 \textit{Caption continued on next page...}} 
\label{fig:exp-peters}
\end{figure}

\begin{figure}[ht!]
\textit{Caption of \figref{fig:exp-peters} continued:}
From the remaining excitatory neurons, two pools D and U were randomly selected to control lever movement (blue shaded areas).
%Dashed blue lines with arrows indicate potential excitatory synapses (multiple connections allowed). Synapses to and from inhibitory neurons were kept fixed (full black lines with arrows and dots respectively). 
Bottom inset: stereotypical movement that had to be generated to receive a reward.
\pl{B}
Spiking activity of the network at learning onset and after 22 hours of learning. Activities of random subsets of neurons from all populations are shown (hidden: excitatory neurons of the recurrent network, which are not in pool D or U). Bottom: lever position inferred from the neural activity in pools D and U. Rewards are indicated by red bars. Gray shaded areas indicate cue presentation.
\pl{C}
Task performance quantified by the average time from cue presentation onset to movement completion. The network was able to solve this task in less than 1 seconds on average after about 8 hours of learning. A task change was introduced at time 24~h (asterisk; function of D and U switched), which was quickly compensated by the network. Using a simplified version of the learning rule, where the re-introduction of non-functional potential connections was approximated using exponentially distributed waiting times (green), yielded similar results (see also panel e). If the connectome was kept fixed after the task change at 24~h performance was significantly worse (black).
\pl{D}
Trial-averaged network activity (top) and lever movements (bottom). Activity traces are aligned to movement onsets (arrows). Y-axis of trial-averaged activity plots are sorted by the time of highest firing rate within the movement at various times during learning: sorting of the first and second plot is based on the activity at $t=0$ h, third and fourth by that at $t=22$ h, fifth is resorted by the activity at $t=46$ h. Network activity is clearly restructured through learning with particularly stereotypical assemblies for sharp upward movements. Bottom: average lever movement (black) and 10 individual movements (gray).
\pl{E}
Turnover of synaptic connections for the experiment shown in \pr{D}. Y-axis is clipped at 3000. Turnover rate during the first two hours was around 12.000 synapses ($\sim 25\%$) and then decreased rapidly. Another increase in spine turnover rate can be observed  after the task change at time 24 h.
\rokni{
\pl{F}
Effect of forgetting due to parameter diffusion over 14 simulated days. Application of reward was stopped after 24 hours when the network had learned to reliably solve the task. Parameters subsequently continue to evolve according to the SDE \eqref{eq:sde}. Onset of forgetting can be observed after day 6. A simple consolidation mechanism triggered after 4 days reliably prevents forgetting.
}
\pl{G}
Histograms of time intervals between disappearance and reappearance of synapses (waiting times) for the exact (upper plot) and approximate (lower plot) learning rule.
\pl{H}
Relative fraction of potential synaptic connections that were stably non-functional, transiently decaying, transiently emerging or stably function during the re-learning phase for the experiment shown in \pr{D}.
\pl{I}
PCA of a random subset of the parameters $\theta_i$. The plot suggests continuing dynamics in task-irrelevant dimensions after the learning goal has been reached (indicated by red color). When the function of the neuron pools U and D was switched after 24 h, the synaptic parameters migrated to a new region. All plots show means over 5 independent runs (error bars: s.e.m.).
\end{figure}

\subsection*{A model for task-dependent self-configuration of a recurrent network of excitatory and inhibitory spiking neurons}
 
We next asked, whether our simple integrated model for reward-modulated rewiring and synaptic plasticity of neural circuits according to Eq.~\eqref{eq:sde} could also explain the emergence of specific computations in recurrent networks of spiking neurons. As paradigm for a specific computational task we took a simplified version of the task that mice learned to carry out in the experimental setup of \cite{PetersETAL:14}. There a reward was given whenever a lever was pressed \david{within a given time window indicated by an auditory cue}. This task is particular suitable for our context, since spine turnover and changes of network activity were continuously monitored in \cite{PetersETAL:14} while the animals learned this task.

We adapted the learning task of \cite{PetersETAL:14} in the following way for our model (see \figref[A]{fig:exp-peters}).
The beginning of a trial was indicated through the presentation of a cue input pattern $\ve{x}(t)$: a fixed, randomly generated rate pattern for all 200 input neurons that lasted until the task was completed, but at most 10s. As network scaffold $\mathcal{N}$ we took a generic recurrent network of excitatory and inhibitory spiking neurons with connectivity parameters for connections between excitatory and inhibitory neurons according to data from layer 2/3 in mouse cortex \cite{AvermannETAL:12}. The network consisted of 60 excitatory and 20 inhibitory neurons (see \figref[A]{fig:exp-peters}). Half of the excitatory neurons could potentially receive synaptic connections from the 200 excitatory input neurons. From the remaining 30 neurons we randomly selected one pool D of 10 excitatory neurons to cause downwards movements of the lever, and another pool U of 10 neurons for upwards movements. We refer to the 40 excitatory neurons that were not members of D or U as hidden neurons. All excitatory synaptic connections from the external input (cue) and between the 60 excitatory neurons (including those in the pools D and U) in the network were subjected to reward-based synaptic sampling. 

\david{
To decode the lever position we filtered the population spikes of D and U with a smoothing kernel. The filtered population spikes of D were then subtracted from those of U to determine the lever position (seem \nameref{sec:methods} for details). When the lever position crossed the threshold +5 after first crossing a lower threshold -5 (black horizontal lines in \figref[A,B]{fig:exp-peters}) within 10~s after cue onset a 400~ms reward window was initiated during which $\rt$ was set to $1$ (red vertical bars in \figref[b]{fig:exp-peters}). Unsuccessful trials were aborted after 10 seconds and no reward was delivered. After each trial a brief holding phase of random length was inserted, during which input neurons were set to a background input rate of 2~Hz.
}

Thus, the network had to learn without any guidance, except for the reward in response to good performance, to create after the onset of the cue first higher firing in pool D, and then higher firing in pool U. This task was challenging, since the network had no information which neurons belonged to pools D and U. Moreover, the synapses did not ``know'' whether they connected to hidden neurons, neurons within a pool, hidden neurons and pool-neurons, or input neurons with other neurons. The plasticity of all these different synapses was gated by the same global reward signal. Since the pools D and U were not able to receive direct synaptic connections from the input neurons, the network also had to learn to communicate the presence of the cue pattern via disynaptic connections from the input neurons to these pools.

Network responses before and after learning are shown in \figref[B]{fig:exp-peters}. Initially, the rewarded goal was only reached occasionally, while the turnover of synaptic connections (number of synaptic connections that became functional or became non-functional in a time window of 2 hours) remained very high (see \figref[E]{fig:exp-peters}). After about 3~h, performance improved drastically (\figref[C]{fig:exp-peters}), and simultaneously the turnover of synaptic connections slowed down (\figref[E]{fig:exp-peters}). 
After learning for 8 hours, the network was able to solve the task in most of the trials, and the average trial duration (movement completion time) had decreased to less than 1 second ($851\pm 46~\text{ms}$, \figref[C]{fig:exp-peters}). Improved performance was accompanied by more stereotyped network activity and lever movement patterns as in the experimental data of \cite{PetersETAL:14}: compare our \figref[D]{fig:exp-peters} with Fig.~1b and Fig.~2j of \cite{PetersETAL:14}. In \figref[D]{fig:exp-peters} we show the trial-averaged activity of the 60 excitatory neurons before and after learning for 22 hours. The neurons are sorted in the first two plots of \figref[D]{fig:exp-peters} by the time of maximum activity after movement onset times before learning, and in the 3rd plot resorted according to times of maximum activity after 22 hours of learning
%, i.e. the time point when the lever movement speed first exceeded a certain threshold 
(see {\em Methods}). These plots show that reward-based learning led to a restructuring of the network activity: an assembly of neurons emerged that controlled a sharp upwards movement. Also, less background activity was observed after 22 hours of learning, in particular for neurons with early activity peaks. Lower panels in \figref[D]{fig:exp-peters} show the average lever movement and 10 individual movement traces at the beginning and after 22 hours of learning. Similar as in \cite{PetersETAL:14} the lever movements became more stereotyped during learning, featuring a sharp upwards movement at cue onset followed by a slower downwards movement in preparation for the next trial.

\rokni{
The synaptic parameter drifts due to SDE \eqref{eq:sde} inherently lead to forgetting. In \figref[F]{fig:exp-peters} we tested this effect by running a very long experiment over 14 simulated days. After 24~h, when the network had learned to reliably solve the task, we stopped the application of the reward but continued the synaptic dynamics. We found that the task could be reliably recalled for more than 5 days. Onset of forgetting was observed after day 6. We wondered whether a simple consolidation mechanism could prevent forgetting in our model. To test this we used the prior distribution $\ps{\bth}$ to stabilize the synaptic parameters. After 4 simulated days we set the mean of the prior to the current value of the synaptic parameters and reduced the variance, while continuing the synaptic dynamics with the same temperature. A similar mechanism for synaptic consolidation has been recently suggested in \cite{KirkpatrickETAL:17}. This mechanism reliably prevents forgetting in our model throughout the simulation time of 14 days. We conclude that the effect of forgetting is quite mild in our model and can be further suppressed by a consolidation mechanism that stabilizes synapses on longer timescales.
}

Next we tested whether similar results could be achieved with a simplified version of the stochastic synapse dynamics while a potential synaptic connection $i$ is non-functional, i.e., $\theta_i\leq 0$. Eq.~\eqref{eq:sde} defines for such non-functional synapses an Ornstein-Uhlenbeck process, which yields a heavy-tailed distribution for the waiting time until reappearance \rokni{(\figref[G]{fig:exp-peters}, left)}. 
We tested whether similar learning performance can be achieved if one approximates the distribution by an exponential distribution, for which we chose a mean of 12~h. The small distance between the blue and green curve in \figref[C]{fig:exp-peters} shows that this is in fact the case for the overall computational task that includes a task switch at 24~h that we describe below. \david{Compensation for the task switch was slightly slower when the approximating exponential distribution was used, but the task performance converged to the same result as for the exact rule.}  This holds in spite of the fact that the approximating exponential distribution is less heavy-tailed (\figref[G]{fig:exp-peters}, right)}. Altogether these results show that rewiring and synaptic plasticity according to Eq.~\eqref{eq:sde} yields self-organization of a generic recurrent network of spiking neurons so that it can control an arbitrarily chosen motor control task. 

\subsection*{Compensation for network perturbations}

We wondered whether this model for the task of [Peters et al., 2014] would in addition be able to compensate for a drastic change in the task, an extra challenge that had not been considered in the experiments of \cite{PetersETAL:14}. To test this we suddenly interchanged after 24~h the actions that were triggered by the pools D and U. D now caused upwards and U downwards lever movement.

We found that our model compensated immediately (see the faster movement in the parameter space depicted in \figref[H]{fig:exp-peters}) for this perturbation and reached after about 8~h a similar performance level as before (\figref[C]{fig:exp-peters}). This compensation phase was accompanied by a substantial increase in the turnover of synaptic connections similar as in experiments for learning of a new task, see e.g. \cite{XuETAL:09} (\figref[E]{fig:exp-peters}). The turnover rate also remained slightly elevated during the subsequent learning period. Furthermore, a new assembly of neurons emerged that now triggered a sharp onset of activity in the pool {D} (compare the activity neural traces at h = 22 and h = 46 in \figref[D]{fig:exp-peters}). \david{Another experimentally observed phenomenon that occured in our model were drifts of neural codes, which happened also during phases of the experiment without perturbations. Despite these drifts,} the task performance stayed constant, similar to experimental data in \cite{DriscollHarvey:16} (see \figref{fig:peths} in \nameref{sec:methods}).

If rewiring was disabled after the task change at 24~h the compensation was significantly delayed and overall performance declined (see black curve in \figref[C]{fig:exp-peters}). Here, we disallowed any turnover of potential synaptic connections such that the connectivity remained the same after 24~h. This result suggests that rewiring is necessary for adapting to the task change. In \rokni{\figref[H]{fig:exp-peters}} we further analyzed the profile of synaptic turnover for the different populations of the network scaffold in \figref[A]{fig:exp-peters}. The synaptic parameters were measured immediately before the task change at 24~h and compared to the connectivity after compensation at 48~h for the experiment shown in \figref[C]{fig:exp-peters} (blue). Most synapses (66-75\%) were non-functional before and after the task change (stable non-functional). About 20\% of the synapses changed their behavior and either became functional or non-functional. Most prominently a large fraction (21.9\%) of the synapses from hidden neurons to U became non-functional while only few (5.9\%) new connections were introduced. The connections from hidden to D showed the opposite behavior. This modification of the network connectome reflects the requirement to reliably route information about the presence of the cue pattern encoded in the activity of hidden neurons to the pool D (and not to U) to initiate the lever movement after the task change.

\legi{We then asked whether rewiring is also necessary for the initial learning of the task. To answer this question, we performed a simulation where the network connectivity was fixed from the beginning. We found that inital task performance was not significantly worse compared to the setup with rewiring. This indicates that at least for this task, rewiring is necessary for compensating task switches, but not for initial task learning. We expect however that this is not the case for more complex tasks, as indicated by a recent study that used artificial neural networks \cite{bellec2017deep}. }

A structural difference between stochastic learning models such as \legi{policy sampling} and learning models that focus on convergence of parameters to a (locally) optimal setting becomes apparent when one tracks the temporal evolution of the network parameters $\bth$ over larger periods of time during the previously discussed learning process \rokni{(\figref[I]{fig:exp-peters})}. Although performance no longer improved after 5~hours, both network connectivity and parameters kept changing in task-irrelevant dimensions.
For \rokni{\figref[I]{fig:exp-peters}} we randomly selected 5$\%$ of the roughly 47000 parameters $\theta_i$ and plotted the first 3 principal components of their dynamics. The task change after 24 hours caused the parameter vector $\bth$ to migrate to a new region within about 8 hours of continuing learning \david{(see \figref{fig:pca-proj} where the projected parameter dynamics is further analyzed)}. Again we observe that \legi{policy sampling} keeps exploring different equally good solutions after the learning process has reached stable performance.

\david{
To further investigate the impact of the temperature parameter $T$ on the exploration speed in the parameter space we measured the amplitudes of parameter changes for different values of $T$. We recorded the synaptic parameters every 20 minutes and estimated the exploration speed by measuring the average Euclidean distance between successive snapshots of the parameter vectors. We found that a temperature of $T=0.1$ resulted in an increase in exploration speed by around 150\% compared to the case of $T=0.0$. A temperature of $T=0.5$ resulted in an increase in exploration speed by around 400\%.
}

\subsection*{Relative contributions of spontaneous and activity-dependent synaptic processes}

\begin{figure}[hp]
\begin{center}
  \includegraphics[width=\textwidth]{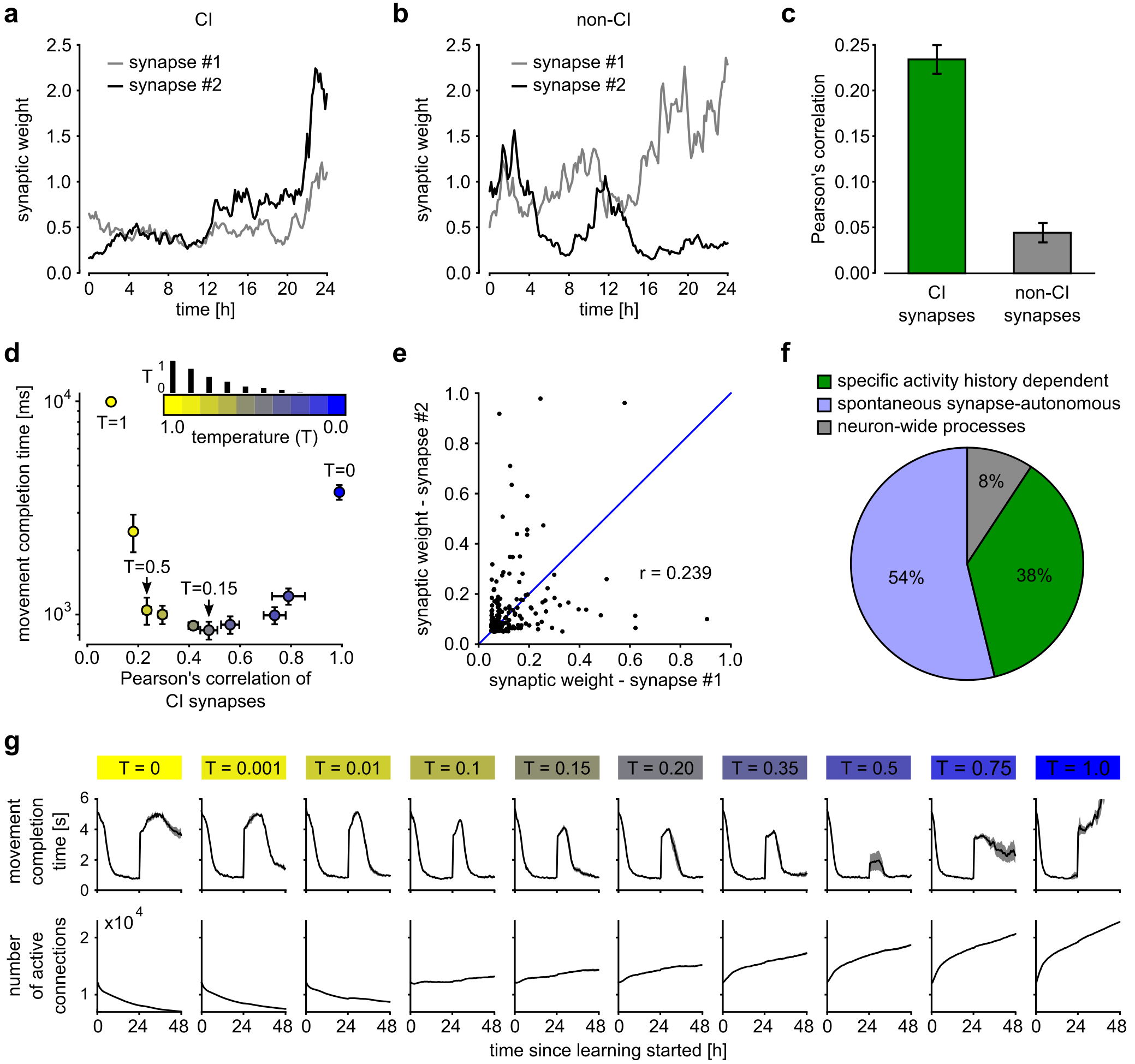}
\end{center}
\caption{{\bf Contribution of spontaneous and neural activity-dependent processes to synaptic dynamics}
\pl{A,B}
Evolution of synaptic weights $w_i$ plotted against time for a pair of CI synapses in \pr{A}, and non-CI synapses in \pr{B}, for temperature $T=0.5$.
\pl{C}
Pearson's correlation coefficient computed between synaptic weights of CI and non-CI synapses of a network with $T=0.5$ after 48 h of learning as in \figref[C,D]{fig:exp-peters}. CI synapses were only weakly, but significantly stronger correlated than non-CI synapses.
\pl{D}
Impact of $T$ on correlation of CI synapses (X-axis) and learning performance (Y-axis). Each dot represents averaged data for one particular temperature value, indicated by the color. 
Values for $T$ were 1.0, 0.75, 0.5, 0.35, 0.2, 0.15, 0.1, 0.01, 0.001, 0.0.
\textit{Caption continued on next page...}
\label{fig:ci-vs-no-ci}
}
\end{figure}

\begin{figure}[ht]
\textit{Caption of Fig.~\ref{fig:ci-vs-no-ci} continued:}
These values are proportional to the small vertical bars above the color bar. The performance (measured in movement completion time) is measured after 48 hours for the learning experiment as in \figref[C,D]{fig:exp-peters}, where the network changed completely after 24~h. Good performance was achieved for a range of temperature values between 0.01 and 0.5. Too low $(<0.01)$ or too high $(>0.5)$ values impaired learning. Means + s.e.m. over 5 independent trials are shown.
\pl{E}
Synaptic weights of 100 pairs of CI synapses that emerged from a run with $T=0.5$. Pearson's correlation is 0.239, comparable to the experimental data in Fig.~8A-D of \cite{DvorkinZiv:16}.
\pl{F}
Estimated contributions of activity history dependent (green), spontaneous synapse-autonomous (blue) and neuron-wide (gray) processes to the synaptic dynamics for a run with $T=0.15$. The resulting fractions are very similar to those in the experimental data, see Fig.~8E of \cite{DvorkinZiv:16}.
\rokni{
\pl{G}
Evolution of learning performance and total number of active synaptic connections for different temperatures as in \pr{D}. Compensation for task perturbation was significantly faster with higher temperatures. Temperatures larger than 0.5 prevented compensation. Overall number of synapses was decreasing for temperatures $T < 0.1$ and increasing for $T \geq 0.1$.
}
\end{figure}

\cite{DvorkinZiv:16} analyzed the correlation of sizes of postsynaptic densities and spine volumes for synapses that shared the same pre- and postsynaptic neuron, called commonly innervated (CI) synapses, and also for synapses that shared in addition the same dendrite (\cisd{}). Activity-dependent rules for synaptic plasticity, such as Hebbian or STDP rules -- on which previous models for network plasticity relied -- suggest that the strength of CI and especially \cisd{} synapses should be highly correlated. But both data from ex-vivo \cite{KasthuriETAL:15} and neural circuits in culture \cite{DvorkinZiv:16} show that postsynaptic density sizes and spine volumes of \cisd{}  synapses are only weakly correlated, with correlation coefficients between 0.23 and 0.34. Thus even with a conservative estimate that corrects for possible influences of their experimental procedure, more than 50\% of the observed synaptic strength appears to result from activity-independent stochastic processes (Fig.~8E of \cite{DvorkinZiv:16}). \david{\cite{bartol2015nanoconnectomic} had previously found larger correlations of synaptic strengths of \cisd{} synapses for a smaller data set (based on 17 \cisd{} pairs instead of the 72 pairs, 10 triplets, and 2 quadruplets in the ex-vivo data from \cite{KasthuriETAL:15}). But the spine volumes differed in these pairs also on average by a factor of around 2.} 
%If one includes in their analysis also a \cisd{} triplet, depicted at the bottom of Fig.~4B, this factor would become slightly larger.}

We asked how such a strong contribution of activity-independent synaptic dynamics affects network learning capabilities, such as the ones that were examined in Fig.~\ref{fig:exp-peters}. We were able to carry out this test because many synaptic connections between neurons that were formed in our model consisted of more than one synapse. We classified pairs of synapses that had the same pre- and postsynaptic neuron as CI synapses (one could also call them \cisd{} synapses, since the neuron model did not have different dendrites), and pairs with the same postsynaptic but different presynaptic neurons as non-CI synapses. Example traces of synaptic weights for CI and non-CI synapse pairs of our network model from Fig.~\ref{fig:exp-peters} are shown in \figref[A,B]{fig:ci-vs-no-ci}. CI pairs were found to be more strongly correlated than non-CI pairs (\figref[C]{fig:ci-vs-no-ci}). However also the correlation of CI pairs was quite low, and varied with the temperature parameter $T$ in Eq.~(\ref{eq:sde}), see  \figref[D]{fig:ci-vs-no-ci}. The correlation was measured in terms of the  Pearson correlation (covariance of synapse pairs normalized between -1 and 1).

Since the correlation of CI pairs in our model depends on the temperate $T$, we analyzed the model of Fig.~\ref{fig:exp-peters} for different temperatures (the temperature had been fixed at T=0.1 throughout the experiments for Fig.~\ref{fig:exp-peters}).
%Next, we analyzed the impact of the temperature $T$ on the task performance. 
In \figref[D]{fig:ci-vs-no-ci} the Pearson correlation coefficient for CI synapses is plotted together with the average performance achieved on the task of \figref[D-H]{fig:exp-peters} \legi{($24$ hours after the task switch)} for networks with different temperatures $T$. The best performing temperature region for the task ($0.01 \leq T \leq 0.5$) roughly coincided with the region of experimentally measured values of Pearson's correlation for CI-synapses. \figref[E]{fig:ci-vs-no-ci} shows the correlation of 100 CI synapse pairs that emerged from a run with $T=0.5$. We found a value of $r=0.239$ in this case. This value is in the order of the lowest experimentally found correlation coefficients in \cite{DvorkinZiv:16} (both in culture and ex-vivo, see Fig.~8A-D in \cite{DvorkinZiv:16}). \rokni
{
The speed of compensation and the overall replenishing of synapses was strongly dependent on the temperature $T$ (see \figref[G]{fig:ci-vs-no-ci}). For $T=0$, a complete compensation for the task changes was prevented (performance converged to $2.5 \pm 0.2~s$ during a longer run of 96 hours). In other words, the temperature region $0.01 \leq T \leq 0.5$ -- that is consistent with experimentally measured Pearson's correlation for CI-synapses -- leads to fastest task re-learning, allowing for a compensation within approximately $12$ hours of exposure.}
%\rokni{The task change induced a brief increase in the number of synapses for $T>0$, but the effect was rather small compared to the overall synaptic turnover.}
For $T=0.15$ we found the best \legi{compensation capabilities} and the closest match to experimentally measured correlations when the results of \cite{DvorkinZiv:16} were corrected for measurement limitations: A correlation coefficient of $r=0.46 \pm 0.034$ for CI synapses and $0.08 \pm 0.015$ for non-CI synapse pairs (mean $\pm$ s.e.m. over 5 trials, \david{CI synapses were significantly stronger correlated than non-CI, p-value below 0.005 in all trials. Statistical significance values based on two-tailed Mann-Whitney U test.}).

\cite{DvorkinZiv:16} further analyzed the ratio of contributions from different processes to the measured synaptic dynamics. They analyzed the contribution of neural activity history dependent processes, which amount for 36\% of synapse dynamics in their data, and that of neuron-wide processes that were not specific to presynaptic activity, but specific to the activity of the postsynaptic neuron (8\%). Spontaneous synapse-autonomous processes were found to explain 56\% of the observed dynamics (see Fig.~8E in \cite{DvorkinZiv:16}). The results from our model, that are plotted in \figref[F]{fig:ci-vs-no-ci}, match these experimentally found values quite well. Altogether we found that the results of \cite{DvorkinZiv:16} are best explained by our model for a temperature parameter between $T=0.5$ (corresponding to their lowest measured correlation coefficient) and $T=0.15$ (corresponding to their most conservative estimate). This range of parameters coincided with well-functioning learning behavior in our model, which included a test of compensation capability for a change of the task after 24~h  (\figref[D]{fig:ci-vs-no-ci}). Hence our model suggests that a large component of stochastic synapse-autonomous processes, as it occurs in the data, supports efficient network learning and compensation for changes in the task.

\section*{\bf Discussion}

Recent experimental data (\cite{DvorkinZiv:16}, where in Fig.~8 also mouse brain data from \cite{KasthuriETAL:15} were reanalyzed, \david{and \cite{NagaokaETAL:16}}) suggest that common models for learning in neural networks of the brain need to be revised, since synapses are subject to powerful processes that do not depend on pre- and postsynaptic neural activity. 
%In particular, these data are in conflict with the common reliance on deterministic activity-dependent rules for synaptic plasticity in neural network models, and also with common models for reward-gated network plasticity based on policy gradient. 
In addition, experimentally found network rewiring has so far not been integrated into models for reward-gated network plasticity. We have presented a theoretical framework that enables us to investigate and understand reward-based network rewiring and synaptic plasticity in the context of the experimentally found high level of activity-independent fluctuations of synaptic connectivity and synaptic strength. We have shown that the analysis of the stationary distribution of network configurations, in particular the Fokker-Planck equation from theoretical physics, allows us to understand how large numbers of local stochastic processes at different synapses can orchestrate global goal-directed network learning. This approach provides a new normative model for reward-gated network plasticity.  

We have shown in \figref{fig:pattern-classification} that the resulting model is consistent with experimental data on dopamine-dependent spine dynamics reported in \cite{YagishitaETAL:14}, and that it provides an understanding how these local stochastic processes can produce function-oriented cortical-striatal connectivity. We have shown in  \figref{fig:exp-peters} that this model also elucidates reward-based self-organization of generic recurrent neural networks for a given computational task. We chose as benchmark task the production of a specific motor output in response to a cue, like in the experiments of \cite{PetersETAL:14}. Similarly as reported in \cite{PetersETAL:14}, the network connectivity and dynamics reorganized itself in our model, just driven by stochastic processes and rewards for successful task completion, and reached a high level of \david{performance}. Furthermore it maintained this computational function in spite of continuously ongoing further rewiring and network plasticity.  A quantitative analysis of the impact of stochasticity on this process has shown in \figref[]{fig:ci-vs-no-ci} that the network learns best when the component of synaptic plasticity that does not depend on neural activity is fairly large, as large as reported in the experimental data of \cite{KasthuriETAL:15,DvorkinZiv:16}. 

Our approach is based on experimental data for the biological implementation level of network plasticity, i.e., for the lowest level  of the Marr hierarchy of models \cite{MarrPoggio:76}. However, we have shown that these experimental data have significant implications for understanding network plasticity on the top level ("what is the functional goal?") and the intermediate algorithmic level ("what is the underlying algorithm?") of the Marr hierarchy. They suggest for the top level that the goal of network plasticity is to sample from a posterior distribution of network configurations.
\legi{This posterior integrates functional demands formalized by the expected discounted reward $\mathcal{V}(\bth)$ with a prior $p_S(\bth)$ in a multiplicative manner $p^*(\bth) \;\propto\; p_S(\bth) \, \times \, \mathcal{V}(\bth)$.} %according to Eq.~\eqref{eq:bayes}. 
Priors can represent structural constraints as well as results of preceding learning experiences and innate programs. \legi{Since our model samples from a distribution proportional to $p^*(\bth)^{\frac{1}{T}}$, for $T=1$, our model suggests to view reward-gated network plasticity as Bayesian inference over network configurations on a slow time scale, see {\em Probabilistic framework for reward-modulated learning} in {\em Methods} for details. For a temperature parameter $T\neq 1$, the model samples from a tempered version of the posterior, which generalizes the basic Bayesian approach.} This Bayesian perspective also creates a link to previous work on Bayesian reinforcement learning \cite{VlassisETAL:12,RawlikETAL:13}. \legi{We note however that we do not consider parameter adaptation in our framework to implement full Bayesian learning, as there is no integration over the posterior paramter settings to obtain network outputs (or actions in a reinforcement learning context). Even if one would do that, it would be of little practical use, since the sampling would be much too slow in any but the simplest networks.}
The experimental data suggest for the intermediate algorithmic level of the Marr hierarchy a strong reliance on stochastic search ("synaptic sampling"). The essence of the resulting model for reward-gated network learning is illustrated in \figref{fig:model-illustration}: The traditional view of deterministic gradient ascent (policy gradient) in the landscape (panel {b}) of reward expectation is first modified through the integration of a prior (panel {c}), and then through the replacement of gradient ascent by continuously ongoing stochastic sampling (\legi{policy sampling}) from the posterior distribution of panel {d}, which is illustrated in panels {e} and {f}. 

\david{This model explains a number of experimental data that had not been addressed by previous models.} Continuously ongoing stochastic sampling of network configurations suggests that synaptic connectivity does not converge to a fixed point solution but rather undergoes permanent modifications (\figref[H,I]{fig:exp-peters}). This \legi{behavior} is compatible with reports of continuously ongoing spine dynamics and axonal sprouting even in the adult brain \cite{HoltmaatSvoboda:09, YasumatsuETAL:08, StettlerETAL:06, YamahachiETAL:09, LoewensteinETAL:11, HoltmaatETAL:05, LoewensteinETAL:15}. \david{Recently proposed models to maintain stable network function in the presence of highly volatile spine dynamics suggest that subsets of connections are selectively stabilized to support network function \cite{BerryNedivi:17,MongilloETAL:17}. Our result shows that high task performance can be reached in spiking neural networks in the presence of high volatility of all synapses. Still our model can be extended with a process that selectively stabilizes synapses on longer timescales as demonstrated in \figref[F]{fig:exp-peters}. In addition, our model predicts that not only synaptic spine dynamics but also \emph{changes} of synaptic efficacies show a large stochastic component on all timescales.}

\david{The continuously ongoing parameter changes induce continuously ongoing changes in the assembly sequences that accompany and control a motor response (see \figref[D]{fig:exp-peters}). These changes do not impair the performance of the network, but rather enable the network to explore different but equally good solutions when exposed for many hours to the same task (see \figref[I]{fig:exp-peters}).} Such continuously ongoing drifts of neural codes in functionally less relevant dimensions have already been observed experimentally in some brain areas \cite{ZivETAL:13,DriscollHarvey:16}. 
\david{Our model also suggests that the same computational function is realized by the same neural circuit in different individuals with drastically different parameters, a feature which has already been addressed in \cite{TangETAL:10, GrashowETAL:10, Marder:11, PrinzETAL:04}.  
In fact, this \emph{degeneracy} of neural circuits is thought to be an important property of biological neural networks \cite{Marder:11, PrinzETAL:04, MarderGoaillard:06}. Our model networks automatically compensate for disturbances by moving their continuously ongoing sampling of network configurations to a new region of the parameter space, as illustrated by the response to the disturbance marked by $\ast$ in \figref[I]{fig:exp-peters}.}

\rokni{Our theoretical framework is consistent with experimental data that showed drifts of neural representations in motor learning \cite{RokniETAL:07}. In that article, a stochastic plasticity model was proposed that is structurally similar to our model. It was shown in computer simulations that a simple feed forward rate-based neural network is able to retain stable functionality despite of such stochastic parameter changes. The authors hypothesized that this is the case because network parameters move on a sub-manifold in parameter space with constant performance. Our theoretical framework provides a mathematical justification for their hypothesis in general, but also refines these statements. It shows that the network samples network configurations (including the rewiring of connections that was not considered in \cite{RokniETAL:07}) from a well-defined distribution. The manifold that is visited during the learning process is given by the high-probability regions of this distribution, but in principle, also sub-optimal regions could be visited. Such sub-optimal regions are however highly unlikely if the parameter space is overcomplete, i.e., if large volumes of the parameter space lead to good performance. Hence, in comparison with \cite{RokniETAL:07}, this work provides the following features: (a) It provides a quantitative mathematical framework for the qualitative descriptions in \cite{RokniETAL:07} that allows a rigorous understanding of the plasticity processes, (b) it includes synaptic rewiring, reproducing experimental data on this topic and providing a hypothesis on its computational role, and (c), it is able to tackle the case of recurrent spiking neural networks as compared to feed forward rate models.}

\david{We have shown in \figref[F]{fig:exp-peters} that despite these permanent parameter drifts, the task performance in our model remains stable for many simulated days if reward delivery is stopped. At the same time, the model is also able to continuously adapt to  changes in the task \figref[c-e]{fig:exp-peters}. These results suggest that our model keeps a quite good balance between stability and plasticity \cite{AbrahamRobins:05}, which has already been shown to be one important functional aspect of network rewiring \cite{FauthETAL:15}. Furthermore, we have shown in \figref[F]{fig:exp-peters}, that the structural priors over synaptic parameters can be utilized to stabilize synaptic parameters similar to previous models of synaptic consolidation \cite{FusiETAL:05,KirkpatrickETAL:17}. In addition, more complex prior distributions over multiple synapses could be utilized to model homeostatic processes and clustering of synapses. The latter has been suggested as a mechanism to tackle the stability-plasticity dilemma \cite{FaresStepanyants:09}.}

In conclusion the mathematical framework presented in this article provides a principled way of understanding the complex interplay of deterministic and stochastic processes that underlie the implementation of goal-directed learning in neural circuits of the brain. It also offers a solution to the problem how reliable network computations can be achieved with a ``dynamic connectome'' \cite{RumpelTriesch:16}. We have argued that the stationary distribution of the high-dimensional parameter vector $\bth$ that results from large numbers of local stochastic processes at the synapses provides a time-invariant perspective of salient properties of a network. Standard reward-gated plasticity rules can achieve that this stationary distribution has most of its mass on regions in the parameter space that provide good network performance. The stochastic component of synaptic dynamics can flatten or sharpen the resulting stationary distribution, depending on whether the scaling parameter $T$ (``temperature'') of the stochastic component is larger or smaller than 1. A functional benefit of this stochastic component is that the network keeps exploring its parameter space even after a well-performing region has been found\legi{, providing one mechanism to tackle the exploration-exploitation dilemma (see \figref[J]{fig:pattern-classification})}. This enables the network to migrate quickly and automatically to a better performing region when the network or task \david{changes}. We found in the case of the motor learning task of Fig.~\ref{fig:exp-peters} that a temperature $T$ around 0.15, which lies in the same range as related experimental data (see Fig.~\ref{fig:ci-vs-no-ci}d), suffices to provide this functionally important compensation capability. \legi{The same mathematical framework can also be applied to artificial neural networks, leading to a novel brain-inspired learning algorithm that uses rewiring to train deep networks under the constraint of very sparse connectivity \cite{bellec2017deep}.}
(1657 words)

\section*{Methods}
\label{sec:methods}

\subsection*{\legi{Probabilistic} framework for reward-modulated learning}
\label{sec:correctness-synaptic-plasticity}
The classical goal of reinforcement learning is to maximize the expected future discounted reward $\mathcal{V}(\bth)$ given by 
\begin{equation}
  \mathcal{V}(\bth) \;=\; \expect[p(\rseq | \bth)]{ \int_{0}^\infty e^{-\frac{\tau}{\tau_e}} \,\rtau \; \d \tau } \;.
  \label{eqn:reward-prob-factorized}
\end{equation}
In Eq.~\eqref{eqn:reward-prob-factorized} we integrate over all future rewards $\rtau$, while discounting more remote rewards exponentially with a discount rate $\tau_e$, which for simplicity was set equal to $1~\unit{s}$ in this paper. We find (see Eq.~\eqref{eqn:eligibility-trace}) that this time constant $\tau_e$ is immediately related to the experimentally studied time window or eligibility trace for the influence of dopamine on synaptic plasticity \cite{YagishitaETAL:14}. \david{This property is true in general for reward-based learning rules that make use of eligibility traces and is not unique to our model.}
The expectation in Eq.~\eqref{eqn:reward-prob-factorized} is taken with respect to the distribution $p(\rseq | \bth)$ over sequences $\rseq = \{ \rtau, \tau \geq 0 \}$ of future rewards that result from the given set of synaptic parameters $\bth$. The stochasticity of the reward sequence $\rseq$ arises from stochastic network inputs, stochastic network responses, and stochastic reward delivery. The resulting distribution $p(\rseq | \bth)$ of reward sequences $\rseq$ for the given parameters $\bth$ can also include influences of network initial conditions by assuming some distribution over these initial conditions. Network initial conditions include for example initial values of neuron membrane voltages and refractory states of neurons. The role of initial conditions on network learning is further discussed below when we consider the online learning scenario in {\em Reward-modulated synaptic plasticity approximates gradient ascent on the expected discounted reward}.

There exists a close relationship between reinforcement learning and Bayesian inference \cite{VlassisETAL:12,RawlikETAL:13,BotvinickToussaint:12}. To make this relationship apparent, we define our model for reward-gated network plasticity by introducing a binary random variable $\rbin$ that represents the currently expected future discounted reward in a probabilistic manner. The likelihood $\pn {\rbin=1}{\bth}$ is determined in this theoretical framework by the expected future discounted reward Eq.~\eqref{eqn:reward-prob-factorized} that is achieved by a network with parameter set $\bth$ (see e.g., \cite{RawlikETAL:13}):
\begin{equation}
  \pn{\rbin=1}{\bth} \;\equiv\; \frac{1}{\mathcal{Z}_{\mathcal{V}}} \mathcal{V}(\bth) \;,
  \label{eqn:meth-reward-prob-factorized2}
\end{equation}
where $\mathcal{Z}_{\mathcal{V}}$ denotes a constant, that assures that Eq.~\eqref{eqn:meth-reward-prob-factorized2} is a correctly normalized probability distribution.
Thus reward-based network optimization can be formalized as maximizing the likelihood $\pn {\rbin=1}{\bth}$ with respect to the network configuration $\bth$. Structural constraints can be integrated into a stochastic model for network plasticity through a prior $p_S(\bth)$ over network configurations. Hence reward-gated network optimization amounts from a theoretical perspective to learning of  the posterior distribution $p^*(\bth | \rbin=1)$, which by Bayes' rule is defined (up to normalization) by $\ps{\bth} \cdot \pn{\rbin=1}{\bth}$. Therefore, the learning goal can be formalized in a compact form as evaluating the posterior distribution $p^*(\bth | \rbin=1)$ of network parameters $\bth$ under the constraint that the abstract learning goal $\rbin=1$ is achieved.

More generally, one is often interested in a tempered version of the posterior 
\begin{equation}
\label{eq:annealed_post}
p_T^*(\bth) \equiv \frac{1}{\mathcal{Z}} p^*(\bth | \rbin=1)^{\frac{1}{T}} \; ,
\end{equation}
where $\mathcal{Z}$ is a suitable normalization constant and $T>0$ is the temperature parameter that controls the ``sharpness'' of $p_T^*(\bth)$. For $T=1$, $p_T^*(\bth)$ is given by the original posterior, $T<1$ emphasizes parameter values with high probability in the posterior, while $T>1$ leads to parameter distributions $p_T^*(\bth)$ which are more uniformly distributed than the posterior.   

\subsection*{Analysis of \legi{policy sampling}}
\label{sec:theorem-1}
Here we prove that the stochastic parameter dynamics Eq.~\eqref{eq:sde} samples from the tempered posterior distribution $p_T^*(\bth)$ given in Eq.~\eqref{eq:annealed_post}. In {\em Results} we suppressed time-dependencies in order to simplify notation. We reiterate Eq.~\eqref{eq:sde-reduced} with explicit time-dependencies of parameters:
\begin{equation}
 d \thi \;=\; \beta \, \left . \ddthetai \log p^*(\bth|\rbin=1) \right|_{\btht}   dt  \;+ \; \sqrt{2 \beta T} \, d \wiener_{i} \;,
 \label{eq:sde_special}
\end{equation}

\noindent where the notation $\left . \ddthetai \,f(\bth) \right|_{\btht}$ denotes the derivative of $f(\bth)$ with respect to $\theta_i$ evaluated at the current parameter values $\btht$. By Bayes' rule, the derivative of the log posterior is the sum of the derivatives of the prior and the likelihood:
\begin{equation*}%\label{eq:deriv_prior_likelihood}
  \ddthetai \log p^*(\bth|\rbin=1) \;=\; \ddthetai \log \ps{\bth} + \ddthetai \log \pn{\rbin=1}{\bth} \;=\; \ddthetai \log \ps{\bth} + \ddthetai \log \mathcal{V}(\bth)\;, 
\end{equation*}

\noindent which allows us to rewrite Eq.~\eqref{eq:sde_special} as
\begin{equation}
d \thi \;=\; \beta \, \left( \left . \ddthetai \,\log \ps{\bth} \right|_{\btht} \, + \, \left . \ddthetai \log \mathcal{V}(\bth)\right|_{\btht}  \right)  dt  \;+ \; \sqrt{2 \beta T} \, d \wiener_{i} \;,
\label{eq:sde-meth}
\end{equation}
which is identical to the form Eq.~\eqref{eq:sde}, where the contributions of $\ps{\bth}$ and $\mathcal{V}(\bth)$ are given explicitly.

 The fundamental property of the synaptic sampling dynamics  Eq.~\eqref{eq:sde_special} is formalized in Theorem \ref{lem:single_sup} and proven below. Before we state the theorem, we briefly discuss its statement in simple terms. Consider some initial parameter setting $\bth(0)$. Over time, the parameters change according to the dynamics \eqref{eq:sde_special}. Since the dynamics include a noise term, the exact value of the parameters $\bth(t)$ at some time $t>0$ cannot be determined. However, it is possible to describe the exact distribution of parameters for each time $t$. 
We denote this distribution by $p_\text{FP}(\bth, t)$, where the ``FP'' subscript stands for ``Fokker-Planck'' since the evolution of this distribution is described by the Fokker-Planck equation \eqref{eq:sup_FP} given below. Note that we make the dependence of this distribution on time explicit in this notation. 
It can be shown that for the dynamics \eqref{eq:sup_FP}, $p_\text{FP}(\bth, t)$ converges to a well-defined and unique {\em stationary distribution} in the limit of large $t$. 
Of practical relevance is the so-called burn-in time after which the distribution of parameters is very close to the stationary distribution.
Note that the parameters will continue to change. Nevertheless, at any time $t$ after the burn in, we can expect the parameter vector $\bth(t)$ to be situated at a particular value with the probability (density) given by the stationary distribution, see \figref[D,F]{fig:model-illustration}. Any distribution that is {\em invariant} under the parameter dynamics is a stationary distribution. Here, invariance means: when one starts with an invariant distribution over parameters in the Fokker-Planck equation, the dynamics are such that this distribution will be kept forever (we will use this below in the proof of Theorem \ref{lem:single_sup}). Theorem \ref{lem:single_sup} states that the parameter dynamics leaves $p_T^*(\bth)$ given in Eq.~\eqref{eq:annealed_post} invariant, i.e., it is a stationary distribution of the network parameters. Note that in general, the stationary distribution may not be uniquely defined. That is, it could happen that for two different initial parameter values, the network reaches two different stationary distributions. Theorem \ref{lem:single_sup} further states that for the synaptic sampling dynamics, the stationary distribution is unique, i.e., the distribution $p_T^*(\bth)$ is reached from {\em any} initial parameter setting when the conditions of the theorem apply.
We now state Theorem \ref{lem:single_sup} formally. 
To simplify notation we drop in the following the explicit time dependence of the synaptic parameters $\bth$.
\begin{thm}
\label{lem:single_sup}
Let $p^{*} (\bth \,|\, \rbin=1)$ be a strictly positive, continuous probability distribution over parameters $\bth$, twice continuously differentiable with respect to $\bth$, and let $\beta>0$. Then
the set of stochastic differential equations Eq.~\eqref{eq:sde_special}
leaves the distribution $p_T^{*}(\bth)$ \eqref{eq:annealed_post} invariant.
Furthermore, $p_T^{*}(\bth)$ is the unique stationary distribution of the sampling dynamics.
\end{thm}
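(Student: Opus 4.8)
The plan is to study the Fokker--Planck equation governing the evolution of the parameter density $p_\text{FP}(\bth,t)$ under the SDE system \eqref{eq:sde_special}, to verify directly that $p_T^*(\bth)$ is a stationary solution of it, and then to invoke ellipticity of the diffusion to upgrade stationarity to \emph{unique} stationarity. First I would observe that \eqref{eq:sde_special} is an overdamped Langevin (Smoluchowski) diffusion: the drift is minus the gradient of the potential $U(\bth) = -\beta \log p^*(\bth \,|\, \rbin=1)$ and the diffusion coefficient is the constant $\beta T > 0$ in every coordinate. The associated Fokker--Planck equation can then be written in flux form,
\[
\frac{\partial}{\partial t} p_\text{FP}(\bth,t) \;=\; -\sum_i \ddthetai J_i(\bth,t), \qquad J_i(\bth,t) \;=\; \beta\Bigl(\ddthetai \log p^*(\bth \,|\, \rbin=1)\Bigr)\, p_\text{FP}(\bth,t) \;-\; \beta T\, \ddthetai p_\text{FP}(\bth,t).
\]

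The key algebraic step comes straight from the definition \eqref{eq:annealed_post}: since $\log p_T^*(\bth) = \tfrac{1}{T}\log p^*(\bth\,|\,\rbin=1) + \text{const}$, we have $\ddthetai \log p^*(\bth\,|\,\rbin=1) = T\, \ddthetai \log p_T^*(\bth)$. Substituting $p_\text{FP} = p_T^*$ into the flux and using $\ddthetai p_T^* = p_T^*\, \ddthetai \log p_T^*$ makes each component $J_i$ vanish identically, so the right-hand side of the Fokker--Planck equation is zero and $p_T^*$ is left invariant. This is precisely the statement that $p_T^* \propto e^{-U/(\beta T)}$ is the Gibbs stationary measure of the Langevin dynamics; the hypotheses of the theorem (strict positivity, continuity, and twice continuous differentiability of $p^*$) are exactly what is needed to make these derivatives well defined, to make the manipulation rigorous, and to guarantee that this detailed-balance (zero-flux) solution exists.

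For uniqueness I would argue via ergodicity. Because the diffusion matrix is $\beta T\,\mathbf{I}$ with $\beta T > 0$, the Fokker--Planck operator is uniformly elliptic; hence the transition kernel of the process has a smooth, strictly positive density and the process is irreducible on the whole parameter space. A nondegenerate, irreducible diffusion that admits an invariant probability measure is ergodic, so that invariant probability measure is unique; this is a standard result in the theory of diffusion processes (see e.g.\ \cite{Gardiner:04}), which I would cite rather than reprove. The main obstacle, and the point I would be most careful about, is not the invariance computation but the regularity and confinement conditions needed to apply the ergodic theorem cleanly: one needs the process to be non-explosive and the candidate density $p_T^*$ to be genuinely normalizable. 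The first follows from a mild growth condition on $\log p^*(\bth\,|\,\rbin=1)$ at infinity (automatically satisfied in our setting, where the prior $p_S(\bth)$ is Gaussian and $\mathcal{V}(\bth)$ is bounded, so that $U(\bth)$ is confining); the second is guaranteed by the assumption that $p^*(\bth\,|\,\rbin=1)$ is a bona fide probability distribution together with the same growth control on $(p^*)^{1/T}$. With these caveats in place, stationarity together with ellipticity-driven ergodicity yields the uniqueness claim.
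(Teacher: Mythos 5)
Your proposal is correct and follows essentially the same route as the paper: plug $p_T^*$ into the Fokker--Planck equation, use $\ddthetai \log p^*(\bth\,|\,\rbin=1) = T\,\ddthetai \log p_T^*(\bth)$ to make the drift and diffusion contributions cancel (you phrase this as vanishing of the probability flux, which is the same computation), and then appeal to nondegeneracy of the noise plus a standard ergodicity result from \cite{Gardiner:04} for uniqueness. Your added remarks on non-explosiveness and normalizability are reasonable extra care on conditions the paper leaves implicit, but they do not change the argument.
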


\begin{proof}
The proof is analogous to the one provided in \cite{KappelETAL:15}. The stochastic differential equation Eq.~\eqref{eq:sde_special} translates into a Fokker-Planck equation \cite{Gardiner:04} that describes the evolution of the distribution over parameters $\bth$
\begin{equation}\label{eq:sup_FP}
 \ddt p_{\text{FP}}(\bth, t) = \sum_{i} - \ddthetai  \left(\beta \, \ddthetai \log p^{*} (\bth \,|\, \rbin=1) \right) p_{\text{FP}}(\bth, t)
 + \ddthetaisq \left(\beta \, T \,  p_{\text{FP}}(\bth, t)\right),
\end{equation}
where $p_{\text{FP}}(\bth, t)$ denotes the distribution over network parameters at time $t$. To show that $p_T^*(\bth)$ leaves the distribution invariant, we have to show that $\ddt p_{\text{FP}}(\bth,t)=0$ (i.e., $p_{\text{FP}}(\bth,t)$ does not change) if we set $p_{\text{FP}}(\bth,t)$ to $p_T^*(\bth)$ on the right hand side of Eq.~\eqref{eq:sup_FP}.
Plugging in the presumed stationary distribution $p_T^{*}(\bth)$ for $p_{\text{FP}}(\bth,t)$ on the right hand side of Eq.~\eqref{eq:sup_FP}, one obtains
\begin{align*}
 \ddt p_{\text{FP}}(\bth,t) &= \sum_{i} - \ddthetai \left(\beta \, \ddthetai \log p^{*} (\bth \,|\, \rbin=1) \, p_T^{*}(\bth)  \right) 
 + \ddthetaisq \left(\beta \, T \, p_T^{*}(\bth) \right)\\
&=  \sum_{i} - \ddthetai \left(\beta \, p_T^{*}(\bth) \, \ddthetai \log p^{*} (\bth \,|\, \rbin=1)  \right)  
 + \ddthetai \left(\beta \, T \, \ddthetai p_T^{*}(\bth)  \right)\\
&= \sum_{i} - \ddthetai \left(\beta  \, p_T^{*}(\bth)  \, \ddthetai \log p^{*} (\bth \,|\, \rbin=1) \right)  
 + \ddthetai \left(\beta \, T \, p_T^{*}(\bth) \, \ddthetai \log p_T^{*}(\bth)  \right)\;,
 \end{align*} which by inserting
$p_T^{*}(\bth)  \;=\; \frac{1}{\mathcal{Z}} p^{*} (\bth \,|\, \rbin=1)^\frac{1}{T}$, with normalizing constant $\mathcal{Z}$,
becomes
 \begin{align*}
\ddt p_{\text{FP}}(\bth,t) &= \frac{1}{\mathcal{Z}} \sum_i - \ddthetai \left(\beta  \, p^{*}(\bth) \, \ddthetai \log p^{*} (\bth \,|\, \rbin=1) \right)  
 + \ddthetai \left(\beta \, T \, p^{*}(\bth) \, \frac{1}{T} \ddthetai \log p^{*} (\bth \,|\, \rbin=1)  \right)\\
%&= \frac{1}{\mathcal{Z}} \sum_i - \ddthetai \left(\beta \, \ddthetai \log p^{*} (\bth \,|\, \rbin=1) \,  p^{*}(\bth) \right) + \ddthetai \left( \beta \, p^{*}(\bth) \, \ddthetai \log p^{*} (\bth \,|\, \rbin=1)  \right)\\
&= \sum_i 0 = 0\;\;.
\end{align*}
This proves that $p_T^{*}(\bth)$ is a stationary distribution of the parameter sampling dynamics Eq.~\eqref{eq:sde_special}. Since $\beta$ is strictly positive, this stationary distribution is also unique (see Section 3.7.2 in \cite{Gardiner:04}).

The unique stationary distribution of Eq.~\eqref{eq:sup_FP} is given by $p_T^*(\bth) = \frac{1}{\mathcal{Z}} p^*(\bth | \rbin=1)^{\frac{1}{T}}$, i.e. $p_T^* (\bth)$ is the only solution for which $\ddt p_{\text{FP}}(\bth, t)$ becomes $0$, which completes the proof. 
\end{proof}

\david{
The uniqueness of the stationary distribution follows because each parameter setting can be reached from any other parameter setting with non-zero probability (ergodicity). The stochastic process can therefore not get trapped in cycles or absorbed into a subregion of the parameter space. The time spent in a certain region of the parameter space is therefore directly proportional to the probability of that parameter region under the posterior distribution. The proof requires that the posterior distribution is smooth and differentiable with respect to the synaptic parameters. This is not true in general for a spiking neural networks. In our simulations we used a stochastic neuron model (defined in the next section). As the reward landscape in our case is defined by the {\em expected} discounted reward (see below), a probabilistic network tends to smoothen this landscape and therefore the posterior distribution.
}

\subsection*{Network model}
\label{sec:meth-neuron-model}
Plasticity rules for this general framework were derived based on a specific spiking neural network model, which we describe in the following. All reported computer simulations were performed with this network model.
We considered a general network scaffold $\mathcal{N}$ of $K$ neurons %$z_1, \dots, z_K$ 
with potentially asymmetric recurrent connections. Neurons are indexed in an arbitrary order by integers between $1$ and $K$. We denote the output spike train of a neuron $k$ by $z_k(t)$. It is defined as the sum of Dirac delta pulses positioned at the spike times $t_k^{(1)}, t_k^{(2)}, \dots$, i.e., $z_k(t) = \sum_l \delta(t-t_k^{(l)})$. Potential synaptic connections are also indexed in an arbitrary order by integers between $1$ and $K_\text{syn}$, where $K_\text{syn}$ denotes the number of potential synaptic connections in the network. We denote by $\preidef$ and $\postidef$ the index of the pre- and postsynaptic neuron of synapse $i$, respectively, which unambiguously specifies the connectivity in the network. Further, we define $\syn{k}$ to be the index set of synapses that project to neuron $k$. Note that this indexing scheme allows us to include  multiple (potential) synaptic connections between a given pair of neurons. We included this experimentally observed feature of biological neuronal networks in all our simulations. We denote by $w_{i}(t)$ the synaptic efficacy of the $i$-th synapse in the network at time $t$.

Network neurons were modeled by a standard stochastic variant of the spike response model  \cite{GerstnerETAL:14}. In this model, the membrane potential of a neuron $k$ at time $t$ is given by
\begin{equation}
u_k(t) \;=\; \sum_{i \,\in\, \syn{k}} \hz_{\prei}(t)\, w_{i}(t) \;+\; \vartheta_k(t) \; , \label{eq:membrane-potential}
\end{equation}
where $\vartheta_k(t)$ denotes the slowly changing bias potential of neuron $k$, and $\hz_{\prei}(t)$ denotes the trace of the (unweighted) postsynaptic potentials (PSPs) that neuron $\preidef$ leaves in its postsynaptic synapses at time $t$. More precisely, it is defined as $\hz_{\prei}(t) = z_{\prei}(t) \ast \epsilon(t)$ given by spike trains filtered with a PSP kernel of the form $\epsilon(t) = \Theta(t) \, \frac{\tau_r}{\tau_m - \tau_r} \left( e^{-\frac{t}{\tau_m}} - e^{-\frac{t}{\tau_r}}  \right)$, with time constants $\tau_m=20~\text{ms}$ and $\tau_r=2~\text{ms}$, if not stated otherwise. Here $\ast$ denotes convolution and $\Theta(\cdot)$ is the Heaviside step function, i.e. $\Theta(x)=1$ for $x\ge 0$ and $0$ otherwise.

The synaptic weights $w_{i}(t)$ in Eq.~\eqref{eq:membrane-potential} were determined by the synaptic parameters $\thi$ through the mapping Eq.~\eqref{eq:thetamap} for $\thi > 0$. Synaptic connections with $\thi \leq 0$ were interpreted as not functional (disconnected) and $w_{i}(t)$ was therefore set to 0 in that case.

The bias potential $\vartheta_k(t)$ in Eq.~\eqref{eq:membrane-potential} implements a slow adaptation mechanism of the intrinsic excitability, which ensures that the output rate of each neuron stays near the firing threshold and the neuron maintains responsiveness \cite{DesaiETAL:99,FanETAL:05}. We used a simple adaptation mechanism which was updated according to
\begin{equation}
  \tau_{\vartheta}\, \frac{\d \vartheta_k(t)}{\d t} \;=\; \nu_0 - z_k(t)  \;,
  \label{eqn:adaptation-current}
\end{equation}
where $\tau_{\vartheta} = 50~\text{s}$ is the time constant of the adaptation mechanism and $\nu_0 = 5~\text{Hz}$ is the desired output rate of the neuron. In our simulations, the bias potential  $\vartheta_k(t)$ was initialized at -3 and then followed the dynamics given in Eq.~\eqref{eqn:adaptation-current}. We found that this regularization significantly increased the performance and learning speed of our network model. \david{In \cite{RemmeWadman:12} a similar mechanism was proposed to balance activity in networks of excitatory and inhibitory neurons. The regularization used here can be seen as a simplified version of this mechanism that regulates the mean firing rate of each excitatory neuron using a simple linear control loop and thereby stabilizes the output behavior of the network.}

We used a simple refractory mechanism for our neuron model. The firing rate, or intensity, of neuron $k$ at time $t$ is defined by the function $f_k(t) \;=\; f(u_k(t), \rho_k(t))$, where $\rho_k(t)$ denotes a refractory variable that measures the time elapsed since the last spike of neuron $k$. We used an exponential dependence between membrane potential and firing rate, such that the instantaneous firing rate of the neuron $k$ at time $t$ can be written as
\begin{equation}
  f_k(t) \;=\; f(u_k, \rho_k) \;=\; \exp(u_k) \Theta(\rho_k-t_\text{ref})\;.
  \label{eq:activation-function}
\end{equation}
Furthermore, we denote by $f_{\posti}(t)$ the firing rate of the neuron postsynaptic to synapse $i$. If not stated otherwise we set the refractory time $t_\text{ref}$ to 5~ms. In addition, a subset of neurons was clamped to some given firing rates (input neurons), such that $f_k(t)$ of these input neurons was given by an arbitrary function. We denote the spike train from these neurons by $\ve{x}(t)$, the network input.

\subsection*{Synaptic dynamics for the reward-based synaptic sampling model}
\label{sec:spiking-synaptic-sampling}
Here, we provide additional details on how the synaptic parameter dynamics Eq.~\eqref{eq:sde} was computed. We will first provide an intuitive interpretation of the equations and then provide a detailed derivation in the next section.  The second term $\ddthetai \log \mathcal{V}(\bth)$ of Eq.~\eqref{eq:sde} denotes the gradient of the expected future discounted reward Eq.~\eqref{eqn:reward-prob-factorized}. In general, optimizing this function has to account for the case where rewards are provided after some delay period. It is well known that this \emph{distal reward problem} can be solved using plasticity mechanisms that make use of eligibility traces in the synapses that are triggered by near coincident spike patterns, but their consolidation into the synaptic weights is delayed and \david{gated} by the reward signal $r(t)$ \cite{SuttonBarto:98,Izhikevich:07}. The theoretically optimal shape for these eligibility traces can be derived using the reinforcement learning theory and depends on the choice of network model. For the spiking neural network model described above, the gradient $\ddthetai \log \mathcal{V}(\bth)$ can be estimated through a plasticity mechanism that uses an eligibility trace $e_{i}(t)$ in each synapse $i$ which gets updated according to
\begin{equation}
 \frac{\d e_{i}(t)}{\d t} \;=\; - \frac{1}{\tau_{e}} e_{i}(t) \;+\; w_{i}(t) \, \hz_{\prei}(t) \,  (\zkt -f_{\posti}(t))\;, \label{eqn:eligibility-trace}
\end{equation}
where $\tau_e=1~\text{s}$ is the time constant of the eligibility trace. Recall that $\preidef$ denotes the index of the presynaptic neuron and $\postidef$ the index of the postsynaptic neuron for synapse $i$. In Eq.~\eqref{eqn:eligibility-trace} $\zkt$ denotes the postsynaptic spike train, $\fkt$ denotes the instantaneous firing rate (Eq.~\eqref{eq:activation-function}) of the postsynaptic neuron and $w_{i}(t) \, \hz_{\prei}(t)$ denotes the postsynaptic potential under synapse $i$.

The last term of Eq.~\eqref{eqn:eligibility-trace} shares salient properties with standard STDP learning rules, since plasticity is enabled by the presynaptic term $\hz_{\prei}(t)$ and gated by the postsynaptic term $(\zkt -f_{\posti}(t))$ (see \cite{PfisterETAL:06}). The latter term also regularizes the plasticity mechanism such that synapses stop growing if the firing probability $f_{\posti}(t)$ of the postsynaptic neuron is already close to one.

The eligibility trace Eq.~\eqref{eqn:eligibility-trace} is \legi{multiplied} by the reward $r(t)$ and integrated in each synapse $i$ using a second dynamic variable
\begin{equation}
 \frac{\d g_{i}(t)}{\d t} \;=\; - \frac{1}{\tau_{g}} g_{i}(t) \;+\;\left( \frac{\rt}{\rhatt} + \alpha \right) \, e_{i}(t) \;,
 \label{eqn:gradient-est}
\end{equation}
\david{where $\rhatt$ is a low-pass filtered version of $\rt$ (described below).}
The variable $g_{i}(t)$ combines the eligibility trace and the reward, and averages over the time scale $\tau_g$. $\alpha$ is a constant offset on the reward signal. \david{This parameter can be set to an arbitrary value without changing the stationary dynamics of the model (see next section)}. In our simulations, this offset $\alpha$ was chosen slightly above $0$ ($\alpha=0.02$) such that small parameter changes were also present without any reward, as observed in \cite{YagishitaETAL:14}. \david{Furthermore, $\alpha$  does not have to be chosen constant. E.g. this term can be used to incorporate predictions about the reward outcome by setting $\alpha$ to the negative of output of a \emph{critic} network that learns to predict future reward. This approach has been previously studied in \cite{FremauxETAL:13} to model experimental data of \cite{SchultzETAL:97,Schultz:02}.}

\david{In the next section we show that $g_{i}(t)$ approximates the gradient of the expected future reward with respect to the synaptic parameter.} In our simulations we found that incorporating the low-pass filtered eligibility traces (Eq.~\eqref{eqn:gradient-est}) into the synaptic parameters works significantly better than using the eligibility traces directly for weight updates, although the latter approach was taken in a number of previous studies (see e.g. \cite{PfisterETAL:06,LegensteinETAL:08,UrbanczikSenn:09}). \david{Eq.~\eqref{eqn:gradient-est} essentially combines the eligibility trace with the reward and smoothens the resulting trace with a low-pass filter with time constant $\tau_g$. This time constant has been chosen to be in the order spontaneous decay of disinhibited CaMKII in the synapse which is closely related to spine enlargement in the dopamine-gated STDP protocol of  \cite{YagishitaETAL:14} (c.f. their Fig.~3F and Fig.~4C).}

$\rhatt$ in Eq.~\eqref{eqn:gradient-est} is a low-pass filtered version of $\rt$ that scales the synaptic updates. It was implemented through $\tau_a \frac{\d \rhatt}{\d t} = -\rhatt + \rt$, with $\tau_a=50~\text{s}$. \david{The value of $\tau_a$ has been chosen to equal $\tau_g$ based on theoretical considerations (see \emph{Online learning}).} This scaling of the reward signal has the following effect. If the current reward  $\rt$ exceeds the average reward $\rhatt$, the effect of the neuromodulatory signal $\rt$ will be greater than $1$. On the other hand, if the current reward is below average synaptic updates will be weighted by a term significantly lower than $1$. Therefore, parameter updates are preferred for which the current reward signal exceeds the average.

Similar plasticity rules with eligibility traces in spiking neural networks have previously been proposed by several authors \cite{Seung:03,XieSeung:04,Izhikevich:07,PfisterETAL:06,Florian:07,LegensteinETAL:08,UrbanczikSenn:09, FremauxETAL:10, FremauxETAL:13}.
\david{In \cite{FremauxETAL:13} also a method to estimate the neural firing rate $f_{\posti}(t)$ from back-propagating action potentials in the synapses has been proposed.}
The main difference to these previous approaches is that the activity-dependent last term in Eq.~\eqref{eqn:eligibility-trace} is scaled by the current synaptic weight $w_i(t)$. This weight-dependence of the update equations induces multiplicative synaptic dynamics and is a consequence of the exponential mapping Eq.~\eqref{eq:thetamap} (see derivation in the next section). This is an important property for a network model that includes rewiring. Note, that for retracted synapses ($w_i(t)=0$), both $e_{i}(t)$ and $g_i(t)$ decay to zero (within few minutes in our simulations). Therefore, we find that the dynamics of retracted synapses is only driven by the first (prior) and last (random fluctuations) term of Eq.~\eqref{eq:sde} and are independent from the network activity. Thus, retracted synapses spontaneously reappear also in the absence of reward after a random amount of time.

The first term in Eq.~\eqref{eq:sde} is the gradient of the prior distribution. We used a prior distribution that pulls the synaptic parameters towards $\thi=0$ such that unused synapses tend to disappear and new synapses are permanently formed. Throughout all simulations we used independent Gaussian priors for the synaptic parameters
\begin{equation}
  \ps{\bth} \;=\; \prod_{i} \ps{\thi}\,,\quad \text{ with } \quad \ps{\thi} \;=\; \frac{1}{\sigma\,\sqrt{2\pi}} \exp\left( -\frac{(\thi - \mu)^2}{2\sigma^2} \right)\; , \nonumber
\end{equation}
where $\sigma$ is the standard deviation of the prior distribution. Using this, we find that the contribution of the prior to the online parameter update equation is given by
\begin{equation}
  \ddthetai \log \ps{\bth} \;=\; \frac{1}{\sigma^2}\left( \mu - \thi \right)\,.
  \label{eqn:dprior-dtheta}
\end{equation}
Finally by plugging Eq.~\eqref{eqn:dprior-dtheta} and \eqref{eqn:gradient-est} into Eq.~\eqref{eq:sde} the synaptic parameter changes at time $t$ are given by
\begin{equation}
  d \thi \;=\; \beta \, \left( \frac{1}{\sigma^2}\left( \mu - \thi \right) \, + \, g_{i}(t)  \right)  dt  \;+ \; \sqrt{2 \beta T} \, d \wiener_{i} \;.
  \label{eqn:std-synapse}
\end{equation}
If not stated otherwise we used $\sigma=2$ and $\mu=0$, and a learning rate of $\beta=10^{-5}$. By inspecting Eq.~\eqref{eqn:std-synapse} it becomes immediately clear that the parameter dynamics follow an Ornstein-Uhlenbeck process it the activity-dependent second \david{term} is inactive (in the absence of reward), i.e. if $g_i(t)=0$. In this case the dynamics are given by the deterministic drift towards the mean value $\mu$ and the stochastic diffusion fueled by the Wiener process $\wiener_{i}$. The temperature $T$ and the standard deviation $\sigma$ scale the contribution of these two forces.

\subsubsection*{Reward-modulated synaptic plasticity approximates gradient ascent on the expected discounted reward}
\label{sec:gradient_expected_discounted_reward}
We first consider a theoretical setup where the network is operated in arbitrarily long episodes such that in each episode a reward sequence $\rseq$ is encountered. The reward sequence $\rseq$ can be any discrete or real-valued function that is positive and bounded. The episodic scenario is useful to derive exact batch parameter update rules, from which we will then deduce online learning rules. Due to stochastic network inputs, stochastic network responses, and stochastic reward delivery, the reward sequence $\rseq$ is stochastic.

The classical goal of reinforcement learning is to maximize the function $\mathcal{V}(\bth)$ of discounted expected rewards Eq.~\eqref{eqn:reward-prob-factorized}.
%\begin{equation}
%  \mathcal{V}(\bth) \;=\; \expect[p(\rseq | \bth)]{ \int_{0}^\infty e^{-\frac{\tau}{\tau_e}} \,\rtau \; \d \tau } \;.
%  \label{eqn:rpm}
%\end{equation}
Policy gradient algorithms perform gradient ascent on $\mathcal{V}(\bth)$ by changing each parameter $\theta_i$ in the direction of the gradient $\partial \log \mathcal{V}(\bth) / \partial \theta_i$. Here, we show that the parameter dynamics Eq.~\eqref{eqn:eligibility-trace}, \eqref{eqn:gradient-est} approximate this gradient, i.e., $g_i(t) \approx \partial \log \mathcal{V}(\bth) / \partial \theta_i$.

It is natural to assume that the reward signal $\rtau$ only depends indirectly on the parameters $\bth$, through the history of network spikes $z_k(\tau)$ up to time $\tau$, which we write as $\bbztau = \{z_k(s) \;|\; 0 \leq s <\tau,\, 1\leq k \leq K  \}$, i.e., $\pn{\rt,\bbzt}{\bth} = \cprob{\rt}{\bbzt} \pn{\bbzt}{\bth}$. We can first expand the expectation $\expect[p(\rseq | \bth)]{\cdot}$ in Eq.~\eqref{eqn:reward-prob-factorized} to be taken over the joint distribution $p(\rseq, \bbz | \bth)$ over reward sequences $\rseq$ and network trajectories $\bbz$. The derivative
\begin{equation}
\ddthetai \log \mathcal{V}(\bth) \;=\; \frac{1}{\mathcal{V}(\bth)} \ddthetai \mathcal{V}(\bth) \;=\; \frac{1}{\mathcal{V}(\bth)} \ddthetai \expect[p(\rseq, \bbz | \bth)]{ \int_{0}^\infty e^{-\frac{\tau}{\tau_e}} \,\rtau \; \d \tau }
\label{eqn:der1}
\end{equation}
can be evaluated using the well-known identity $\frac{\partial}{\partial x} \expect[p( a | x)]{f(a)} = \expect[p( a | x)]{f(a) \frac{\partial}{\partial x} \log p(a | x) }$:
\begin{eqnarray}
\ddthetai \log \mathcal{V}(\bth)  &\;=\;& \frac{1}{\mathcal{V}(\bth)}  \expect[p(\rseq, \bbz | \bth)]{ \int_{0}^\infty  e^{-\frac{\tau}{\tau_e}} \, \rtau \,
                                                         \ddthetai \log \cprob{\rtau, \bbztau}{\bth}  \; \d \tau } \;
 \nonumber \\[3mm]&\;=\;& \frac{1}{\mathcal{V}(\bth)}  \expect[p(\rseq, \bbz | \bth)]{ \int_{0}^\infty  e^{-\frac{\tau}{\tau_e}} \, \rtau \,
                                                         \ddthetai\big(\log \cprob{\rtau}{\bbztau} + \log \pn{\bbztau}{\bth} \big)  \; \d \tau } \;
 \nonumber \\[3mm]
  &\;=\;&  \expect[p(\rseq, \bbz | \bth)]{ \int_{0}^\infty  e^{-\frac{\tau}{\tau_e}} \,  \frac{\rtau}{\mathcal{V}(\bth)} \, \ddthetai \log \pn{\bbztau}{\bth} \; \d \tau } \;.
\label{eqn:derivation-dlogpndtheta-1}
\end{eqnarray}
Here, $\pn{\bbztau}{\bth}$ is the probability of observing the spike train $\bbztau$ in the time interval 0 to $\tau$. For the definition of the network $\mathcal{N}$ given above, the gradient $\ddthetai \log \pn{\bbztau}{\bth}$ of this distribution can be directly evaluated.  Using Eq.~\eqref{eq:membrane-potential} and \eqref{eq:thetamap} we get \cite{PfisterETAL:06}
\begin{eqnarray}
   \ddthetai \log \pn{\bbztau}{\bth} \;&=&\; \frac{\partial w_{i}}{\partial \theta_i} \dd{w_{i}} \int_0^\tau 
     \zks \, \log \left( \fks \right) \,-\, \fks  \; \d s \nonumber \\[3mm]
     \;&\approx&\;  
     \int_0^\tau w_{i} \, \hz_{\prei}(s) \, (\zks-\fks) \, \d s \;, \label{eq:ddw_sigm}
\end{eqnarray}
where we have used that by construction only the rate function $\fks$ depends on the parameter $\theta_i$. \david{If one discretizes time and assumes that rewards and parameter updates are only realized at the end of each episode, the REINFORCE rule is recovered \cite{Williams92simple}.}

In Eq.~\eqref{eq:ddw_sigm} we used the approximation $\frac{\partial w_{i}}{\partial \theta_i} \approx w_{i}$. This expression ignores the discontinuity of Eq.~\eqref{eq:thetamap} at $\theta_i=0$, where the function is not differentiable. In practice we found that this approximation is quite accurate if $\theta_0$ is large enough such that $\exp( \theta_i - \theta_0 )$ is close to zero (which is the case for $\theta_0=3$ in our simulation). In control experiments we also used a smooth function $w_i = \exp( \theta_i - \theta_0 )$ (without the jump at $\theta_i=0$) for which Eq.~\eqref{eq:ddw_sigm} is exact, and found that this yields results that are not significantly different from the ones that use the mapping Eq.~\eqref{eq:thetamap}.

\subsubsection*{Online learning}

Eq.~\eqref{eqn:derivation-dlogpndtheta-1} defines a batch learning rule with an average taken over learning episodes where in each episode network responses and rewards are drawn according to the distribution $p(\rseq, \bbz | \bth)$. In a biological setting, there are typically no clear episodes but rather a continuous stream of network inputs and rewards and parameter updates are performed continuously (i.e., learning is online). The analysis of online policy gradient learning is far more complicated than the batch scenario, and typically only approximate results can be obtained that however perform well in practice, see e.g., \cite{Seung:03,XieSeung:04} for discussions.  

In order to arrive at an online learning rule for this scenario, we consider an estimator of Eq.~\eqref{eqn:derivation-dlogpndtheta-1} that approximates its value at each time $t>\tau_g$ based on the recent network activity and rewards during time $[t-\tau_g, t]$ for some suitable $\tau_g>0$. We denote the estimator at time $t$ by $G_i(t)$ where we want $G_i(t) \approx \ddthetai \log \mathcal{V}(\bth)$ for all $t>\tau_g$. To arrive at such an estimator,
we approximate the average over episodes in Eq.~\eqref{eqn:derivation-dlogpndtheta-1} by an average over time where each time point is treated as the start of an episode. The average is taken over a long sequence of network activity that starts at time $t$ and ends at time $t+\tau_g$. Here, one systematic difference to the batch setup is that one cannot guarantee a time-invariant distribution over initial network conditions as we did there since those will depend on the current network parameter setting. However, under the assumption that the influence of initial conditions (such as initial membrane potentials and refractory states) decays quickly compared to the time scale of the environmental dynamics, it is reasonable to assume that the induced error is negligible. We thus rewrite Eq.~\eqref{eqn:derivation-dlogpndtheta-1} in the form (we use the abbreviation $PSP_i(s) = w_{i}(s) \, \hz_{\prei}(s)$).
\begin{equation}
 \ddthetai \log \mathcal{V}(\bth) \;\approx\; G_i(t) =  \frac{1}{\tau_g} \int_{t}^{t+\tau_g} \int_{\zeta}^{t+\tau_g} \; e^{-\frac{\tau-\zeta}{\tau_e}} \;  \frac{\rtau}{\mathcal{V}(\bth)} \; \int_{\zeta}^{\tau} \, PSP_{i}(s) \, (\zks\,-\,\fks)  \; \d s \; \d \tau  \; \d \zeta  \;, \nonumber
\end{equation}
where  $\tau_g$ is the length of the sequence of network activity over which the empirical expectation is taken. Finally, we can combine the second and third integral into a single one, rearrange terms and substitute $s$ and $\tau$ so that integrals run into the past rather than the future, to obtain
\begin{equation}
  G_i(t) \;\approx\; \frac{1}{\tau_g} \int_{t-\tau_g}^{t} \; \frac{\rtau}{\mathcal{V}(\bth)} \; \int_{0}^{\tau} e^{-\frac{s}{\tau_e}} \, PSP_{i}(\tau-s) \, (\zk(\tau-s)\,-\,\fk(\tau-s))  \; \d s \; \d \tau \;,
 \label{eqn:prthetadtheta}
\end{equation}
We now discuss the relationship between $G_i(t)$ and Eq.~\eqref{eqn:eligibility-trace}, \eqref{eqn:gradient-est} to show that the latter equations approximate $G_i(t)$. Solving Eq.~\eqref{eqn:eligibility-trace} with zero initial condition $e_i(0)=0$ yields
\begin{equation}
  e_{i}(t) = \int_{0}^{t} e^{-\frac{s}{\tau_e}} \, PSP_{i}(t-s) \, (\zk(t-s)\,-\,\fk(t-s)) \, \d s  \;.
  \label{eqn:eligibility-trace-solved}
\end{equation}
This corresponds to the inner integral in Eq.~\eqref{eqn:prthetadtheta} and we can write
\begin{equation}
  G_i(t) \;\approx\; \frac{1}{\tau_g} \int_{t-\tau_g}^{t} \; \frac{\rtau}{\mathcal{V}(\bth)} \; e_i(\tau) \; \d \tau \; = \left \langle \frac{\rt}{\mathcal{V}(\bth)} \; e_i(t) \right \rangle_{\tau_g} \; \approx \left \langle \frac{\rt}{\rhatt} \; e_i(t) \right \rangle_{\tau_g},
 \label{eqn:prthetadthetarew}
\end{equation}
where $\langle \cdot \rangle_{\tau_g}$ denotes the temporal average from $t-\tau_g$ to $t$ and $\rhatt$ estimates the expected discounted reward through a slow temporal average.

Finally, we observe that any constant $\alpha$ can be added to $\rtau / \mathcal{V}(\bth)$ in Eq.~\eqref{eqn:derivation-dlogpndtheta-1} since 
\begin{equation}
\expect[p(\rseq, \bbz | \bth)]{ \int_{0}^\infty  e^{-\frac{\tau}{\tau_e}} \,  \alpha \, \ddthetai \log \pn{\bbztau}{\bth} \; \d \tau } = 0
\end{equation}
 for any constant $\alpha$ (cf. \cite{Williams92simple,UrbanczikSenn:09}).
 
Hence, we have $G_i(t) \;\approx\; \left \langle \left( \frac{\rt}{\rhatt} + \alpha \right) \; e_i(t) \right \rangle_{\tau_g}$. \david{Eq.~\eqref{eqn:gradient-est} implements this in the form of a running average and hence $g_i(t) \approx G_i(t) \approx \ddthetai \log \mathcal{V}(\bth)$ for $t>\tau_g$. Note that this result assumes that the parameters $\bth$ change slowly on the time-scale of $\tau_g$ and $\tau_g$ has to be chosen significantly longer than the time constant of the eligibility trace $\tau_e$ such that the estimator works reliably, so we require $\tau_e < \tau_g < \frac{1}{\beta}$. The time constant $\tau_a$ to estimate the average reward $\mathcal{V}(\bth)$ through $\tau_a \frac{\d \rhatt}{\d t} = -\rhatt + \rt$ should be on the same order as the time constant $\tau_g$ for estimating the gradient. We selected both to be 50~s in our simulations.
Simulations using the batch model outlined above and the online learning model showed qualitatively the same behavior for the parameters used in our experiments (data not shown).}

\subsection*{\david{Role of the prior distribution}}

\begin{figure}[ht]
\begin{center}
  \includegraphics{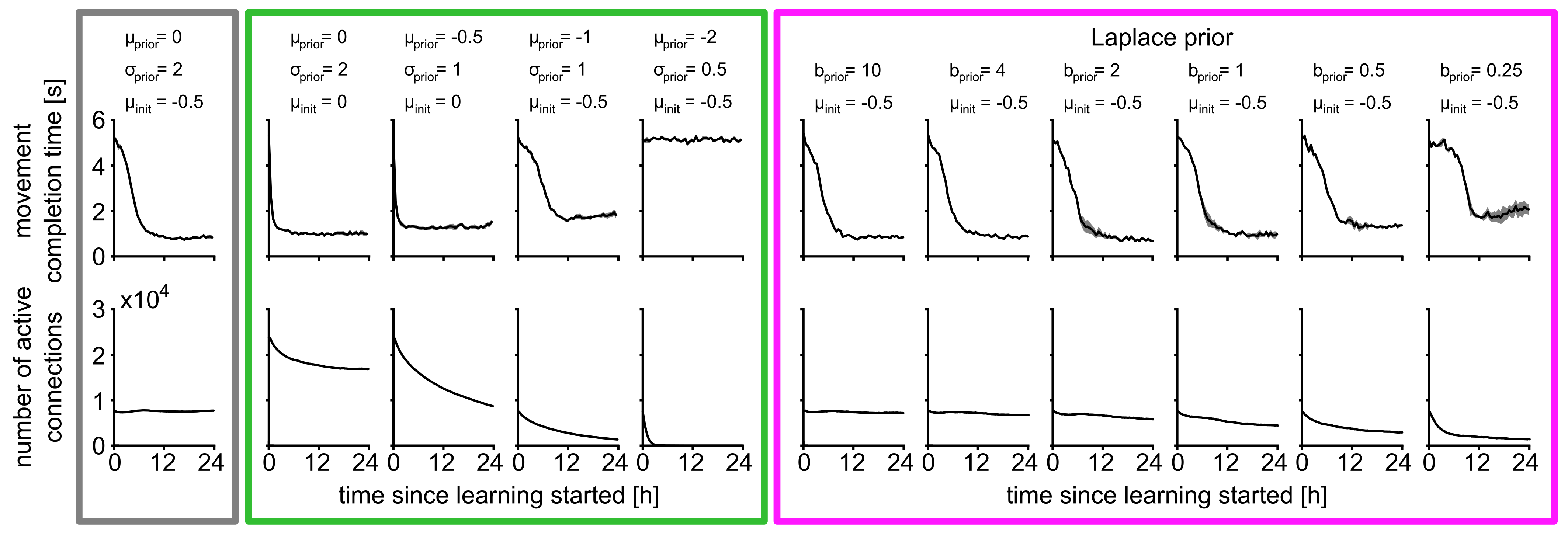}
\end{center}
\rokni{
\caption{{\bf Impact of the prior distribution on the synaptic dynamics.}
Task performance and total number of active synaptic connections throughout learning for prior distributions and distribution of initial synaptic parameters. Synaptic parameters were initially drawn from a Gaussian distribution with mean $\mu_{\text{init}}$ and $\sigma=0.5$. Gaussian prior distribution with different parameters (green) are compared to the parameter set used in all other experiments (gray). In addition a Laplace prior with different parameters was tested (purple). The prior distribution and the initial synaptic parameters had a marked effect on the task performance and overall network connectivity.
}
\label{fig:prior}
}
\end{figure}

\david{
In \figref{fig:prior} we further investigate the role of the prior distribution and initial network configuration on task performance. We measured the performance and total number of active connections for different prior distributions and distributions of initial synaptic parameters. The total number of active synaptic connections depends on the parameters of the prior distribution. We tested prior distributions with different means and variances. Stronger priors with smaller variance can be used to create sparser networks. Different parameter settings can lead to quite different network connectivities at a similar task performance. A too strong prior (e.g., $\mu=-2$, $\sigma=0.5$) leads to very sparse networks, thereby preventing learning.
}

\david{
In addition to the Gaussian prior distribution we tested a Laplace prior of the form $\ps{\theta_i}=\frac{1}{2\,b} \exp(-\frac{|\theta_i|}{b})$, with zero mean and scale parameter $b>0$. This leads to a constant negative drift term in the parameter dynamics Eq.~\eqref{eq:sde}, i.e. $\ddthetai \,\log \ps{\bth} = -\frac{1}{b}$ for active synaptic connections. Convergence to sparse solutions is faster with this prior and good task performance can be reached by networks with less active connections compared to the Gaussian prior. For example, the network with $b=2$ solved the task in 0.66 seconds on average using roughly 5700 active connections, whereas the best solution for the Gaussian prior was 0.83 seconds on average with typically more than 7500 active connections. Again, for the Laplace prior, parameters that enforced too sparse networks prevented learning.
}

\subsection*{Simulation details}

\begin{table}
\centering
\begin{tabular}{ c c l }
 \textit{symbol} &  \textit{value} &  \textit{description}\\
  \hline
  $T$ & 0.1  & temperature \\
  $\tau_e$ & 1~s  & time constant of eligibility trace \\
  $\tau_g$ & 50~s  & time constant of gradient estimator \\
  $\tau_a$ & 50~s  & time constant to estimate the average reward \\
  $\alpha$ & 0.02  & offset to reward signals \\
  $\beta$ & $10^{-5}$  & learning rate \\
  $\mu$ & 0  & mean of prior \\
  $\sigma$ & 2  & std of prior \\
\end{tabular}
\caption{\textbf{Parameters of the synapse model Eq.~\eqref{eqn:eligibility-trace}, \eqref{eqn:gradient-est} and \eqref{eqn:std-synapse}.} Parameter values were found by fitting the experimental data of \cite{YagishitaETAL:14}. If not stated otherwise, these values were used in all experiments.}
\label{tab:parameters}
\end{table}

Simulations were preformed with NEST \cite{GewaltigDiesmann:07} using an in-house implementation of the synaptic sampling model; additional tests were run in Matlab R2011b (Mathworks). The code/software described in the paper is freely available online at \cite{KappelETAL:17}. The differential equations of the neuron and synapse models were approximated using the Euler method, with fixed time steps $\Delta t = 1~\text{ms}$. All network variables were updated based on this time grid, except for the synaptic parameters $\thi$ according to Eq.~\eqref{eqn:std-synapse} which were updated only every 100~ms to reduce the computation time. Control experiments with $\Delta t =0.1~\text{ms}$, and 1~ms update steps for all synaptic parameters showed no significant differences. If not stated otherwise synaptic parameters were initially drawn from a Gaussian distribution with $\mu=-0.5$ and $\sigma=0.5$ and the temperature was set to $T=0.1$. Synaptic delays were 1~ms. Synaptic parameter changes were clipped at $\pm 4 \times 10^{-4}$ and synaptic parameters $\theta_i$ were not allowed to exceed the interval $[-2,5]$ for the sake of numerical stability.

\subsubsection*{Details to: {Task-dependent routing of synaptic connections through the interaction of stochastic spine dynamics with rewards}}

The number of potential excitatory synaptic connections between each pair of input and MSN neurons was initially drawn from a Binomial distribution ($p=0.5$, $n=10$). The connections then followed the reward-based synaptic sampling dynamics Eq.~\eqref{eq:sde} as described above. Lateral inhibitory connections were fixed and thus not subject to learning. These connections between MSN neurons were drawn from a Bernoulli distribution with $p=0.5$ and synaptic weights were drawn from a Gaussian distribution with $\mu=-1$ and $\sigma=0.2$, truncated at zero. Two subsets of ten neurons were connected to either one of the targets $T_1$ or $T_2$.

To generate the input patterns we adapted the method from \cite{KappelETAL:15}. The inputs were  representations of a simple symbolic environment, realized by Poisson spike trains that encoded sensory experiences $P_1$ or $P_2$. The 200 input neurons  were assigned to Gaussian tuning curves ($\sigma=0.2$) with centers independently and equally scattered over the unit cube. The sensory experiences $P_1$ and $P_2$ were represented by two different, randomly selected points in this 3-dimensional space. The stimulus positions were overlaid with small-amplitude jitter ($\sigma=0.05$). For each sensory experience the firing rate of an individual input neuron was given by the support of the sensory experience under the input neuron's tuning curve (maximum firing rate was 60~Hz). An additional offset of 2~Hz background noise was added. The lengths of the spike patterns were uniformly drawn from the interval [750~ms, 1500~ms]. The spike patterns were alternated with time windows (durations uniformly drawn from the interval [1000~ms, 2000~ms]) during which only background noise of 2~Hz was presented.

The network was rewarded if the assembly associated to the current sensory experience fired stronger than the other assembly. More precisely, we used a sliding window of 500~ms length to estimate the current output rate of the neural assemblies. Let $\hat{\nu}_1(t)$ and $\hat{\nu}_2(t)$ denote the estimated output rates of neural pools projecting to $T_1$ and $T_2$, respectively, at time $t$ and let $I(t)$ be a function that indicates the identity of the input pattern at time $t$, i.e. $I(t)=1$ if pattern $P_1$ was present and $I(t)=-1$ if pattern $P_2$ was present. If $I(t)(\hat{\nu}_1(t) - \hat{\nu}_2(t)) < 0$ the reward was set to $\rt=0$. Otherwise the reward signal was given by $\rt = S\left(\frac{1}{5}( I(t)\hat{\nu}_1(t) - I(t)\hat{\nu}_2(t) - \nu_0 )\right)$, where $\nu_0=25~\text{Hz}$ is a soft firing threshold and $S(\cdot)$ denotes the logistic sigmoid function. The reward was recomputed every 10~ms. During the presentation of the background patterns no reward was delivered.

In \figref[D,E]{fig:pattern-classification} we tested our reward-gated synaptic plasticity mechanism with the reward-modulated STDP pairing protocol reported in \cite{YagishitaETAL:14}. We applied the STDP protocol to 50 synapses and reported mean and s.e.m values of synaptic weight changes in \figref[D,E]{fig:pattern-classification}. Briefly, we presented 15 pre/post pairings; one per 10 seconds. In each pre/post pairing 10 presynaptic spikes were presented at a rate of 10 Hz. Each presynaptic spike was followed ($\Delta t = 10~\text{ms}$) by a brief postsynaptic burst of 3 spikes (100~Hz). The total duration of one pairing was thus 1~s indicated by the gray shaded rectangle in \figref[E]{fig:pattern-classification}. During the pairings the membrane potential was set to $u(t)=-2.4$ and Eq.~\eqref{eq:activation-function},\eqref{eqn:eligibility-trace}, \eqref{eqn:gradient-est} and \eqref{eqn:std-synapse} solved for each synapse. Reward was delivered here in the form of a rectangular-shaped wave of constant amplitude 1 and duration 300~ms to mimic puff application of dopamine. Rewards were delivered for each pre/post pairing and reward delays were relative to the onset of the STDP pairings. \david{The time constants $\tau_e$ and $\tau_g$, the reward offset $\alpha$ and the temperature $T$ of the synapse model were chosen to qualitatively match the results of Fig.~1 and Fig.~4 of \cite{YagishitaETAL:14} (see Tab.~\ref{tab:parameters}). The value of $\tau_a$ for the estimation of the average reward has been chosen to equal $\tau_g$ based on theoretical considerations (see \emph{Online learning}). We found that the parameters of the prior had relatively small effect on the synaptic dynamics on timescales of one hour. We tuned the parameters of the prior distribution by hand to achieve good results on the task presented in \figref{fig:exp-peters} (see \figref{fig:prior} for a comparison of different prior distributions).} These parameters were used in all experiments if not stated otherwise.

Synaptic parameter changes in \figref[G]{fig:pattern-classification} were measured by taking snapshots of the synaptic parameter vectors every 4~minutes. Parameter changes were measured in terms of the Euclidean norm of the difference between two successively recorded vectors. The values were then normalized by the maximum value of the whole experiment and averages over 5 trials were reported.

\rokni{
To generate the dPCA projection of the synaptic parameters in \figref[J]{fig:pattern-classification}, we adopted the methods of \cite{KobakETAL:16}. We randomly selected a subset of 500 synaptic parameters to compute the projection. We sorted the parameter vectors by the average reward achieved over a time window of 10 minutes and binned them into 10 equally spaced bins. The dPCA algorithm was then applied on this dataset to yield the projection matrix and estimated fractions of explained reward variance. The projection matrix was then applied to the whole trajectory of network parameters and the first 2 components were plotted. The trajectory was projected onto the estimated expected reward surface based on the binned parameter vectors.
}

\subsubsection*{Details to: {A model for task-dependent self-configuration of a recurrent network of excitatory and inhibitory spiking neurons}}
\begin{figure}[ht]
\begin{center}
  \includegraphics{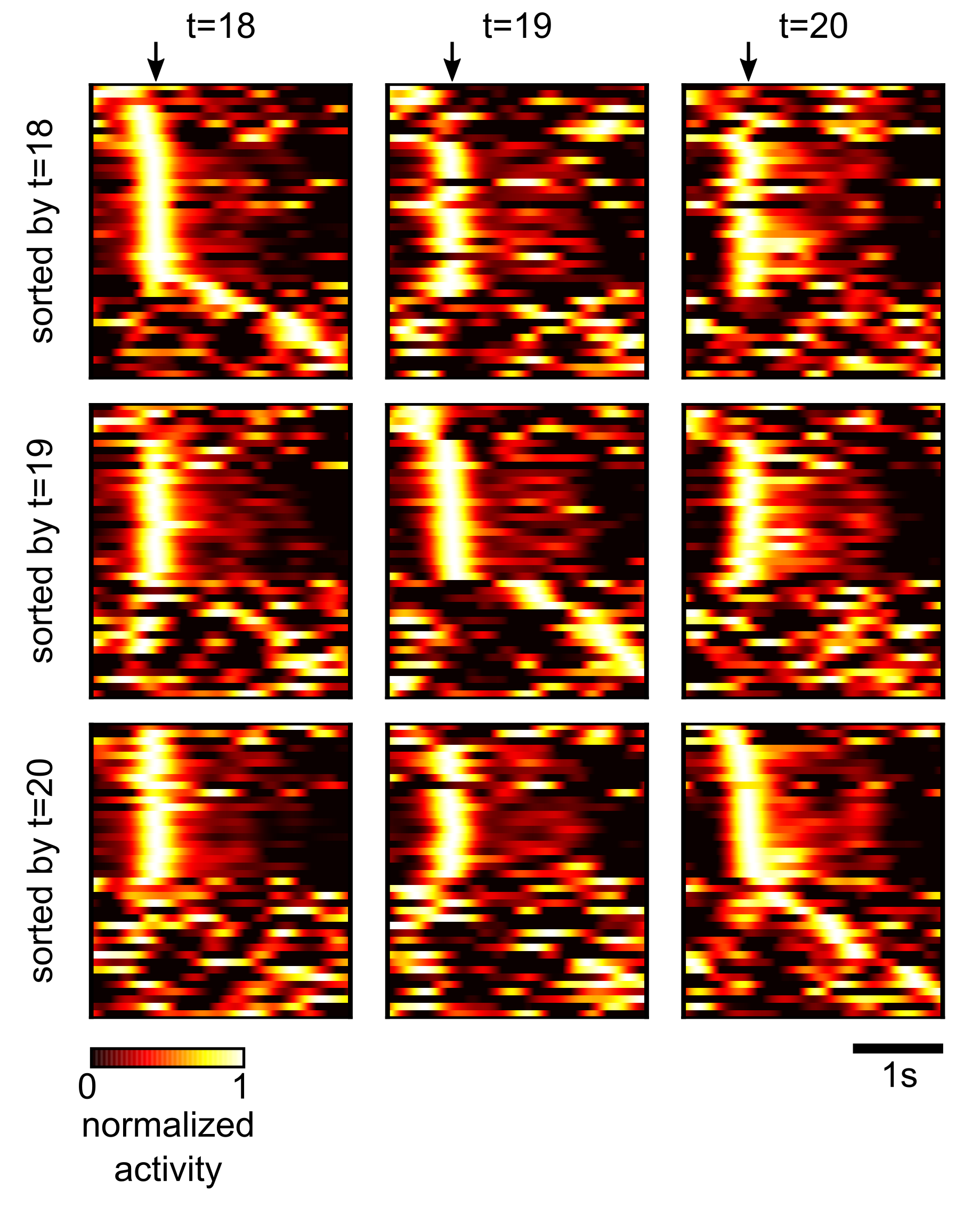}
\end{center}
\caption{{\bf Drifts of neural codes while performance remained constant.}
Trial-averaged network activity as in \figref[D]{fig:exp-peters} evaluated at three different times selected from  a time window where the network performance was stable (see \figref[C]{fig:exp-peters}). Each column shows the same trial-averaged activity plot but subject to different sorting. Rows correspond to one sorting criterion based on one evaluation time.
}
\label{fig:peths}
\end{figure}

Neuron and synapse parameters were as reported above, except for the inhibitory neurons for which we used faster dynamics with a refractory time $t_\text{ref} = 2~\text{ms}$ and time constants $\tau_m = 10~\text{ms}$ and $\tau_r = 1~\text{ms}$ for the PSP kernel. The network connectivity between excitatory and inhibitory neurons was as suggested in \cite{AvermannETAL:12}. Excitatory (pools D, U and hidden) and inhibitory neurons were randomly connected with connection probabilities  given in Table~2 in \cite{AvermannETAL:12}. Connections include lateral inhibition between excitatory and inhibitory neurons. The connectivity to and from inhibitory neurons was kept fixed throughout the simulation (not subject to synaptic plasticity or rewiring). The connection probability from excitatory to inhibitory neurons was given by 0.575. The synaptic weights were drawn from a Gaussian distribution (truncated at zero) with $\mu = 0.5$ and $\sigma = 0.1$. Inhibitory neurons were connected to their targets with probability 0.6 (to excitatory neurons) and 0.55 (to inhibitory neurons) and the synaptic weights were drawn from a truncated normal distribution with $\mu = -1$ and $\sigma = 0.2$. The number of potential excitatory synaptic connections between each pair of excitatory neurons was drawn from a Binomial distribution ($p=0.5$, $n=10$). These connections were subject to the reward-based synaptic sampling and rewiring described above. \david{In the resulting network scaffold around $49\%$ of connections consisted of multiple synapses.}

To infer the lever position from the network activity, we weighted spikes from the neuron pool D with $-1$ and spikes from U with $+1$, summed them and then filtered them with a long PSP kernel with $\tau_r = 50~\text{ms}$ (rise) and $\tau_m = 500~\text{ms}$ (decay). The cue input pattern was realized by the same method that was used to generate the patterns $P_1$ and $P_2$ outlined above. If a trial was completed successfully the reward signal $\rt$ was set to 1 for 400~ms and was 0 otherwise. After each trial a short holding phase was inserted during which the input neurons were set to 2~Hz background noise. The lengths of these holding phases were uniformly drawn from the interval [1~s, 2~s]. At the time points marked by $\ast$, the reward policy was changed by switching the decoding functions of the neural pools D and U and by randomly re-generating the input cue pattern.

To identify the movement onset times in \figref[D]{fig:exp-peters} we adapted the method from \cite{PetersETAL:14}. Lever movements were recorded at a sampling rate of 5~ms. Lever velocities were estimated by taking the difference between subsequent time steps and filtering them with a moving average filter of 5 time steps length. A Hilbert transform was applied to compute the envelope of the lever velocities. The movement onset time for each trial was then defined as the time point where the estimated lever velocity exceeded a threshold of 1.5 in the upward movement direction. If this value was never reached throughout the whole trial the time point of maximum velocity was used (most cases at learning onset).

The trial-averaged activity traces in \figref[D]{fig:exp-peters} were generated by filtering the spiking activity of the network with a Gaussian kernel with $\sigma = 75~\text{ms}$. The activity traces were aligned with the movement onset times (indicated by black arrows in \figref[D]{fig:exp-peters}) and averaged across 100 trials. The resulting activity traces were then normalized by the neuron's mean activity over all trials and values below the mean were clipped. The resulting activity traces were normalized to the unit interval.

Turnover statistics of synaptic connections in \figref[E]{fig:exp-peters} were measured as follows. The synaptic parameters were recorded in intervals of 2 hours. The number of synapses that appeared (crossed the threshold of $\thi=0$ from below) or disappeared (crossed $\thi=0$ from above) between two measurements were counted and the total number was reported as turnover rate.

\rokni{
For the consolidation mechanism in \figref[F]{fig:exp-peters} we used a modified version of the algorithm where we introduced for each synaptic parameter $\theta_i$ an independent mean $\mu_i$ for the prior distribution $\ps{\bth}$. After 4 simulated days we set $\mu_i$ to the current value of $\theta_i$ for each synaptic parameter and the standard deviation $\sigma$ was set to 0.05. Simulation of the synaptic parameter dynamics was then continued for 10 subsequent days.
}

For the approximation of simulating retracted potential synaptic connections in \figref[C,G]{fig:exp-peters} we paused evaluation of the SDE \eqref{eq:sde} for $\theta_i \leq 0$. Instead, synaptic parameters of retracted connections where randomly set to values above zero after random waiting times drawn from an Exponential distribution with a mean of 12 hours. When a connection became functional at time $t$ we set $\theta_i(t) = 10^{-5}$ and reset the eligibility trace $e_i(t)$ and gradient estimator $g_i(t)$ to zero and then continued the synaptic dynamics according to \eqref{eq:sde}. Histograms in \figref[F]{fig:exp-peters} were computed over bins of 2 hours width.

In \figref{fig:peths} we further analyzed the trial-averaged activity at three different time points (18~h, 19~h and 20~h) where the performance was stable (see \figref[C]{fig:exp-peters}). Drifts of neural codes on fast time scales could also be observed during this phase of the experiment.

\begin{figure}[ht]
\begin{center}
  \includegraphics{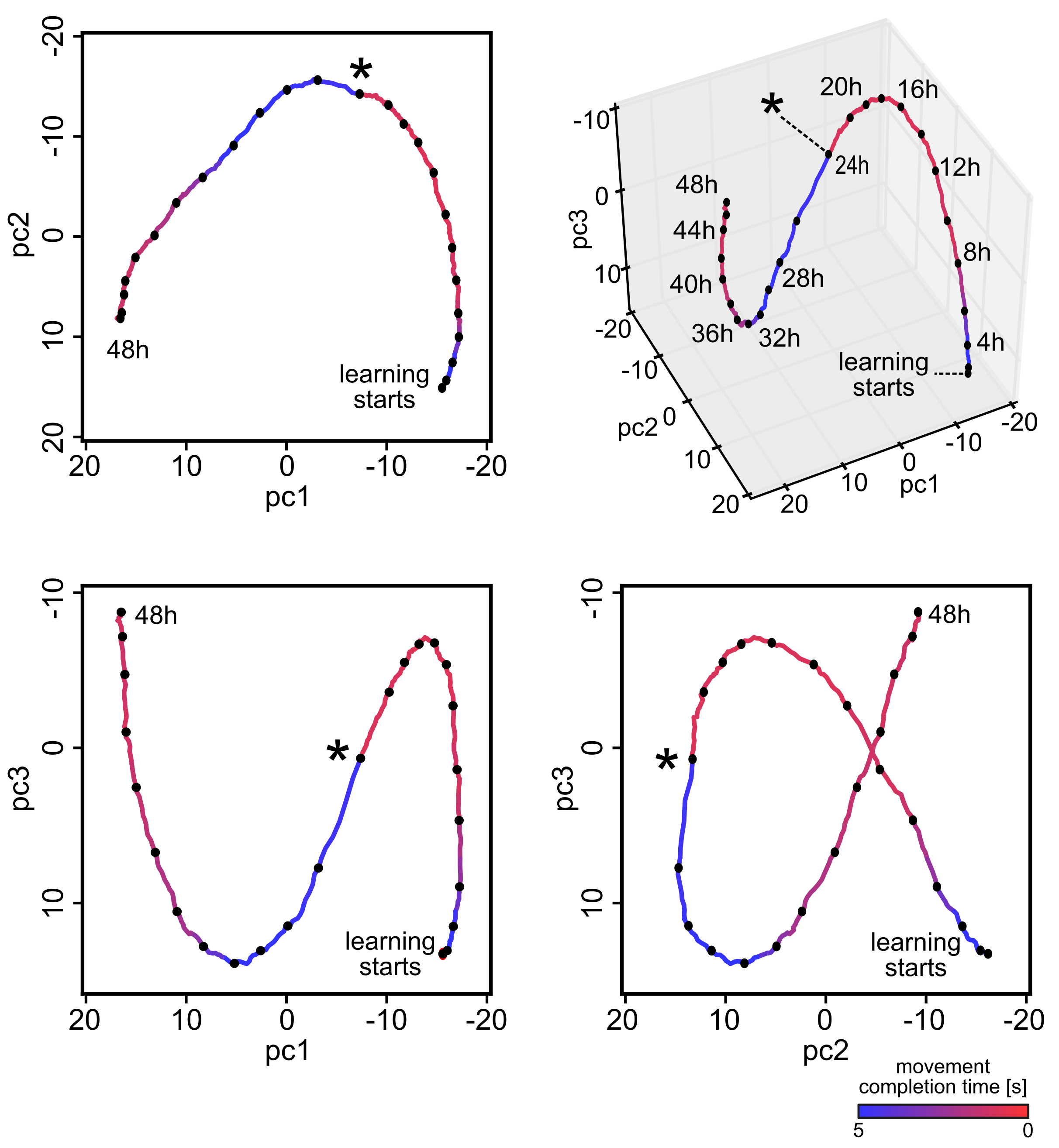}
\end{center}
\caption{\david{{\bf 2D projections of the PCA analysis in \figref[I]{fig:exp-peters}.} The 3D projection as in \figref[I]{fig:exp-peters} (top right) and the corresponding 2D projections are shown.}
}
\label{fig:pca-proj}
\end{figure}

\david{
Since the 3D illustration of the PCA projection in \figref[I]{fig:exp-peters} is ambiguous, the corresponding 2D projections are shown in \figref{fig:pca-proj}. The projection to the first 2 components (pc1 and pc2) show the migration of synaptic parameters to a new region after the task change. The first 3 principal components explain $82\%$ of the total variance in the parameter dynamics.
}

\subsubsection*{Details to: {Compensation for network perturbations}}

The black curve in \figref[C]{fig:exp-peters} shows the learning curve of a network for which rewiring was disabled after the task change at 24~h. Here, synaptic parameters were not allowed to cross the threshold at $\theta_i = 0$ and thus could not change sign after 24~h. Apart from this modification the synaptic dynamics evolved according to Eq.~\eqref{eqn:std-synapse} as above with $T=0.1$.

For the analysis of synaptic turnover in \figref[G]{fig:exp-peters} we recorded the synaptic parameters at $t_1=24$~h and $t_2=48$~h. We then classified each potential synaptic connection $i$ into one of four classes, stable non-functional: $(\theta_i(t_1) \leq 0) \wedge (\theta_i(t_2) \leq 0)$, transient decaying: $(\theta_i(t_1) > 0) \wedge (\theta_i(t_2) \leq 0)$, transient emerging: $(\theta_i(t_1) \leq 0) \wedge (\theta_i(t_2) > 0)$ and stable functional: $(\theta_i(t_1) > 0) \wedge (\theta_i(t_2) > 0)$.

In \figref[H]{fig:exp-peters} we randomly selected 5$\%$ of the synaptic parameters $\theta_i$ and recorded their traces over a learning experiment of 48~hours (1 sample per minute). The principal component analysis (PCA) was then computed over these traces, treating the parameter vectors at each time point as one data sample. The high-dimensional trace was then projected to the first three principal components in \figref[H]{fig:exp-peters}, and colored according to the average movement completion time that was acquired by the network at the corresponding time points.

\subsubsection*{Details to: {Relative contribution of spontaneous and activity-dependent processes to synaptic plasticity}}
Synaptic weights in \figref[a,b]{fig:ci-vs-no-ci} were recorded in intervals of 10 minutes. We selected all pairs of synapses with common pre- and postsynaptic neurons as CI synapses and synapse pairs with the same post- but not the same presynaptic neuron as non-CI synapses. In \figref[d-f]{fig:ci-vs-no-ci} we took a snapshot of the synaptic weights after 48 hours of learning and computed the Pearson correlation of all CI and non-CI pairs for random subsets of around 5000 pairs. Data for 100 randomly chosen CI synapse pairs are plotted of \figref[E]{fig:ci-vs-no-ci}.

In \figref[F]{fig:ci-vs-no-ci} we analyzed the contribution of activity-dependent and spontaneous processes in our model. \cite{DvorkinZiv:16} reported that a certain degree of the stochasticity in their results could be attributed to their experimental setup. The maximum detectable correlation coefficient was limited to $0.76-0.78$, due to the variability of light fluorescence intensities which were used to estimate the sizes of postsynaptic densities. Since in our computer simulations we could directly read out values of the synaptic parameters we were not required to correct our results for noise sources in the experimental procedure (see p.~16ff and equations on p.~18 of \cite{DvorkinZiv:16}). This is also reflected in our data by the fact that we got a correlation coefficient that was close to 1.0 in the case $T=0$ (see \figref[D]{fig:ci-vs-no-ci}). Following the procedure of \cite{DvorkinZiv:16} we estimated in our model the contributions of activity history dependent and spontaneous synapse-autonomous processes as in Fig.~8E of \cite{DvorkinZiv:16}. Using the assumption of zero measurement error and thus a theoretically achievable maximum correlation coefficient of $r=1.0$. \david{The Pearson correlation of CI-synapses was given by $0.46 \pm 0.034$ and that of non-CI synapses by $0.08 \pm 0.015$. Therefore, we estimated the fraction of contributions of specific activity histories to synaptic changes (for $T=0.15$) as $0.46 - 0.08 = 0.38$ and of spontaneous synapse-autonomous processes as $1.0 - 0.46 = 0.54$ \cite{DvorkinZiv:16}. The remaining $8\%$ (measured correlation between non-CI synapses) resulted from processes that were not specific to presynaptic input, but specific to the activity of the postsynaptic neuron (neuron-wide precesses).}

\subsection*{Acknowledgments}
Written under partial support by the Human Brain Project of the European Union $\#$604102 and $\#$720270.

%\bibliography{refs}

\end{document}